\documentclass[11pt,a4paper]{article}

\usepackage[T1]{fontenc}
\usepackage[utf8]{inputenc}

\usepackage{amsmath,amsthm,mathtools}
\usepackage{bm}
\usepackage{thmtools}

\usepackage{stix2}

\usepackage[final,protrusion=true]{microtype}

\usepackage[margin=2.5cm,top=2.8cm]{geometry}
\usepackage{setspace}
\usepackage{parskip}
\usepackage{booktabs,array,tabularx}
\usepackage{graphicx}
\usepackage{caption,subcaption}

\usepackage{float} 
\usepackage{algorithm}
\usepackage{algpseudocode}

\usepackage[numbers,sort&compress]{natbib}

\usepackage[dvipsnames,svgnames,x11names]{xcolor}

\definecolor{ink}{HTML}{111827}        
\definecolor{muted}{HTML}{6B7280}      
\definecolor{paperblue}{HTML}{2563EB}  
\definecolor{papercyan}{HTML}{0891B2}  
\definecolor{paperteal}{HTML}{0F766E}  
\definecolor{papergreen}{HTML}{16A34A} 
\definecolor{paperamber}{HTML}{D97706} 
\definecolor{paperred}{HTML}{DC2626}   
\definecolor{paperpurple}{HTML}{7C3AED}
\definecolor{softblue}{HTML}{EFF6FF}   
\definecolor{softgray}{HTML}{F8FAFC}   

\colorlet{darkblue}{paperblue}
\colorlet{darkred}{paperred}
\colorlet{darkgreen}{papergreen}
\colorlet{darkorange}{paperamber}

\newtheorem{theorem}{Theorem}[section]
\newtheorem{proposition}[theorem]{Proposition}

\theoremstyle{definition}
\newtheorem{definition}[theorem]{Definition}
\newtheorem{remark}[theorem]{Remark}

\usepackage{mdframed}
\mdfdefinestyle{modernbox}{
  linecolor=paperblue,
  linewidth=0.8pt,
  backgroundcolor=softblue,
  roundcorner=4pt,
  innertopmargin=8pt,
  innerbottommargin=8pt,
  innerleftmargin=10pt,
  innerrightmargin=10pt
}

\usepackage{hyperref}
\usepackage{bookmark}

\hypersetup{
  colorlinks=true,
  linkcolor=paperblue,
  citecolor=paperpurple,
  urlcolor=papercyan,
  filecolor=paperteal,
  pdftitle={$2B$ or Not $2B$: Three Algorithms for Streaming Covariance Estimation},
  pdfauthor={Felix Reichel},
  pdfsubject={Streaming covariance estimation},
  pdfkeywords={streaming covariance, online statistics, Welford algorithm, Chan--Golub--LeVeque, conformal prediction, numerical stability}
}

\newcommand{\R}{\mathbb{R}}
\newcommand{\E}{\mathbb{E}}

\newcommand{\X}{\mathbf{X}}
\newcommand{\s}{\mathbf{s}}
\newcommand{\G}{\mathbf{G}}

\newcommand{\one}{\mathbf{1}}
\newcommand{\x}{\mathbf{x}}
\newcommand{\z}{\mathbf{z}}
\newcommand{\M}{\mathbf{M}}

\newcommand{\bmu}{\bm{\mu}}
\newcommand{\bSigma}{\bm{\Sigma}}
\newcommand{\bDelta}{\bm{\Delta}}

\newcommand{\normF}[1]{\left\lVert#1\right\rVert_F}

\newcommand{\eps}{\varepsilon_{\mathrm{mach}}}
\newcommand{\hatS}{\widehat{\bSigma}}

\usepackage{fancyhdr}
\begin{document}
\begin{center}

{\LARGE\bfseries $2B$ or Not $2B$: A Tale of Three Algorithms for Streaming:
Covariance Estimation after Welford and Chan--Golub--LeVeque\footnote{Felix Reichel, Graduate Student, Department of Economics, Lund School of Economics and Management (LUSEM),
Altenbergerstra\ss{}e~69, 4040 Linz, Austria. Email: \texttt{fe3873re-s@student.lu.se}.}}\\[14pt]
{\large Felix Reichel}\\[6pt]
{\small\textit{May 1, 2026}}
\end{center}

\begin{abstract}
\noindent
Three algorithms for computing the unbiased sample covariance matrix in a streaming
or distributed setting are placed on a unified algebraic, numerical, and statistical
foundation.
The \emph{Gram} algorithm, derived from the bariance reformulation of
\citet{reichel2025bariance}, maintains the running cross-product matrix
$\G_t = \sum_{i=1}^{t}\x_i\x_i^\top$ and column-sum vector
$\s_t = \sum_{i=1}^t \x_i$, yielding the unbiased covariance in $O(p^2)$ per update.
The \emph{Welford} algorithm \citep{welford1962} propagates a running mean and
outer-product corrections, achieving the same asymptotic cost with provably better
numerical stability under large data shifts.
The \emph{Chan--Golub--LeVeque} (CGL) algorithm \citep{chan1979} supports
block-parallel merging via an exact combination formula, making it the natural
choice for distributed and map-reduce architectures.
All three produce the same estimator in exact arithmetic; their finite-precision
behaviour differs markedly.
Beyond runtime and numerical comparisons, we introduce a \emph{conformal prediction}
framework for streaming covariance estimation that yields finite-sample,
distribution-free confidence sets for each entry of the covariance matrix at any
step $t$ of the data stream.
Experiments confirm that the Gram algorithm is fastest for batch computation,
Welford is uniquely robust to catastrophic cancellation under large mean shifts,
CGL is optimal for distributed settings, and conformal intervals achieve the
nominal coverage level across all three algorithms.
\end{abstract}

\medskip
\noindent\textbf{MSC (2020):} Primary 62H12; Secondary 62-08, 65F30, 65Y20.\\
\textbf{Keywords:} streaming covariance, online statistics, Welford algorithm,
Chan--Golub--LeVeque, conformal prediction, numerical stability.

\tableofcontents

\section{Introduction}
\label{sec:intro}

Covariance estimation is a foundational task in statistics, machine learning, and
signal processing.
In the classical batch setting all $n$ observations reside in memory and the
textbook formula
\begin{equation}\label{eq:batch}
  \bSigma_n = \frac{1}{n-1}\bigl(\X^\top\X - n^{-1}\s\s^\top\bigr),
  \quad \s = \X^\top\one_n,
\end{equation}
is applied once.
Modern applications routinely violate the batch assumption: sensor streams,
federated databases, and large language model training pipelines all generate
data faster than can be accumulated.
An online algorithm must update its estimate of $\bSigma_t$ as each new observation
$\x_t$ arrives, using $O(p^2)$ working memory and $O(p^2)$ work per update.

Three strategies address this problem, but a unified treatment covering
algebraic equivalence, floating-point error analysis, and statistical
uncertainty quantification has not previously appeared.
We fill this gap.

\paragraph{The Gram (bariance) algorithm.}
It is generally well known (e.g. \citet{reichel2025bariance}) that the sample variance has an $O(n)$
computation via scalar sums, without explicit centering.
Lifting this to the matrix case yields \eqref{eq:batch}, with the streaming
update rule $\G_t \gets \G_{t-1} + \x_t\x_t^\top$ and $\s_t \gets \s_{t-1}+\x_t$.
The on-demand formula $\hatS_t = (t\G_t - \s_t\s_t^\top)/[t(t-1)]$ executes
in one level-3 BLAS call.
Its weakness is \emph{catastrophic cancellation} when the data mean is large
relative to the variance: both $t\G_t$ and $\s_t\s_t^\top$ are $O(t^2\|\bar\x\|^2)$
while their difference is $O(t\|\bSigma\|)$.

\paragraph{The Welford algorithm.}
\citet{welford1962} proposed a shift-invariant one-pass recurrence for scalar
variance, extended to covariance by \citet{chan1979}.
The algorithm maintains a running mean $\bmu_t$ and a matrix $\M_t$ of centred
outer-product corrections, updating both in $O(p^2)$ per step.
Because corrections are computed relative to the current mean, the
method is immune to large constant shifts in the data.

\paragraph{The Chan--Golub--LeVeque (CGL) algorithm.}
\citet{chan1979} introduced a binary merge formula for independently computed
covariance summaries: given $(n_A,\bmu_A,\M_A)$ and $(n_B,\bmu_B,\M_B)$,
the combined summary is computed exactly.
This makes the algorithm tree-parallelisable and directly applicable
to federated settings where nodes cannot share raw observations.

\paragraph{Conformal prediction for streaming covariance.}
All three algorithms produce point estimates.
To quantify uncertainty, we propose a \emph{split conformal prediction}
protocol \citep{vovk2005,angelopoulos2023} that constructs a finite-sample,
distribution-free confidence interval for each entry $\Sigma_{kl}$ at
every step $t \ge 2$ of the stream.
The interval has guaranteed marginal coverage $1-\alpha$ for any
$\alpha \in (0,1)$, with no distributional assumptions on the data.
Experiments show that the coverage guarantee is tight across all three
algorithms under well-conditioned data, but the Gram interval collapses
(inflates catastrophically) under large data shifts, while Welford and
CGL maintain valid coverage.

\paragraph{Contributions.}
\begin{enumerate}
  \item Complete algorithmic descriptions with proofs of correctness and
        three-way algebraic equivalence (Sections \ref{sec:algorithms}--\ref{sec:algebra}).
  \item A floating-point error analysis quantifying rounding accumulation and
        catastrophic cancellation for each algorithm (Section \ref{sec:fp}).
  \item A split conformal prediction framework for streaming covariance entry
        estimation, with a finite-sample coverage guarantee (Section \ref{sec:conformal}).
  \item Benchmarks on x86-64 hardware with OpenBLAS covering runtime (varying
        $n$ and $p$), accuracy under Gaussian, heavy-tailed, ill-conditioned,
        and near-singular data, and conformal coverage under large shifts
        (Sections \ref{sec:benchmarks}--\ref{sec:results}).
  \item A practitioner decision guide (Section \ref{sec:guide}).
\end{enumerate}
Complete proofs for all results appear in Appendix~\ref{app:proofs}.

\section{Notation and Setup}
\label{sec:notation}

Let $\x_1,\x_2,\ldots$ be a stream of observations in $\R^p$, $p \ge 2$.
After $t \ge 2$ steps the target is the unbiased sample covariance
\begin{equation}\label{eq:target}
  \bSigma_t = \frac{1}{t-1}\sum_{i=1}^t(\x_i-\bar\x_t)(\x_i-\bar\x_t)^\top,
  \quad \bar\x_t = t^{-1}\sum_{i=1}^t \x_i.
\end{equation}
We write $\s_t = \sum_{i=1}^t \x_i$, $\G_t = \sum_{i=1}^t \x_i\x_i^\top$,
$\bmu_t = \s_t/t$, and $\M_t$ for the Welford/CGL correction matrix.
The unit roundoff for IEEE~754 double precision is $\eps = 2^{-53} \approx 1.11\times10^{-16}$.
We write $\|\cdot\|_F$, $\|\cdot\|_2$, $\|\cdot\|_{\max}$ for the Frobenius,
spectral, and entry-wise maximum norms.
For a matrix $A$ the condition number is $\kappa(A) = \|A\|_2\|A^{-1}\|_2$.

\section{Algorithms}
\label{sec:algorithms}

\subsection{Gram (Bariance) Algorithm}
\label{sec:gram}

\begin{algorithm}[H]
\caption{Gram streaming covariance}\label{alg:gram}
\begin{algorithmic}[1]
\Require Stream $\x_1,\x_2,\ldots\in\R^p$
\State $\s \gets \mathbf{0}_p$;\enspace $\G \gets \mathbf{0}_{p\times p}$;\enspace $t\gets 0$
\For{each new observation $\x$}
  \State $t \gets t+1$;\enspace $\s \gets \s + \x$;\enspace $\G \gets \G + \x\x^\top$
  \If{$t \ge 2$}
    \State \Return $\hatS_t^{\mathrm{Gram}} = (t\G - \s\s^\top)/[t(t-1)]$
  \EndIf
\EndFor
\end{algorithmic}
\end{algorithm}

\begin{theorem}[Gram identity]\label{thm:gram}
  For any $\x_1,\ldots,\x_t\in\R^p$ with $t\ge 2$,
  \[
    \bSigma_t \;=\; \frac{t\G_t - \s_t\s_t^\top}{t(t-1)}.
  \]
\end{theorem}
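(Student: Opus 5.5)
The plan is to expand the centred sum in the definition \eqref{eq:target} of $\bSigma_t$ and write each resulting term through $\G_t$ and $\s_t$. Throughout, the mean is $\bar\x_t = \s_t/t$.

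First, expand each summand bilinearly:
\[
(\x_i-\bar\x_t)(\x_i-\bar\x_t)^\top = \x_i\x_i^\top - \x_i\bar\x_t^\top - \bar\x_t\x_i^\top + \bar\x_t\bar\x_t^\top .
\]
Because $\bar\x_t$ does not depend on $i$, it can be pulled out of each sum. Summing over $i=1,\ldots,t$ then gives four terms:
\begin{itemize}
  \item the first term sums to $\G_t$;
  \item the two cross terms sum to $\s_t\bar\x_t^\top$ and $\bar\x_t\s_t^\top$, each equal to $\s_t\s_t^\top/t$;
  \item the last term sums to $t\,\bar\x_t\bar\x_t^\top = \s_t\s_t^\top/t$.
\end{itemize}
Collecting these, the centred scatter matrix equals $\G_t - \s_t\s_t^\top/t$.

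Next, divide by $t-1$, which is legitimate since $t\ge 2$. Multiplying numerator and denominator by $t$ gives
\[
\bSigma_t = \frac{\G_t - t^{-1}\s_t\s_t^\top}{t-1} = \frac{t\G_t - \s_t\s_t^\top}{t(t-1)},
\]
which is the claimed formula. It also agrees with the batch form \eqref{eq:batch}, using $\X^\top\X = \G_n$.

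There is no real obstacle here; the argument is purely algebraic and holds in exact arithmetic for arbitrary real vectors. The only point needing care is the outer-product cross terms. They are transposes of each other, so each must be shown separately to equal $\s_t\s_t^\top/t$, which uses the symmetry of $\s_t\s_t^\top$. One should not treat them as scalars to be doubled. The hypothesis $t\ge 2$ is used only to ensure that $t(t-1)\neq 0$.
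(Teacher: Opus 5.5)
Your proof is correct and follows the same route as the paper. The paper does the identical expansion entry-wise in $(k,l)$ rather than at the matrix level, collects $G_{kl} - s_ks_l/t$, and then divides by $t-1$. One small aside: the cross terms need no symmetry of $\s_t\s_t^\top$, since $\s_t\bar\x_t^\top$ and $\bar\x_t\s_t^\top$ each equal $t^{-1}\s_t\s_t^\top$ directly.
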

\begin{proof}
  See Appendix~\ref{app:gram_proof}.
\end{proof}

\noindent\textbf{BLAS structure.}
Each update adds a rank-one correction to $\G$ (DSYR, level-2 BLAS)
and a vector increment to $\s$ (DAXPY).
Computing $\hatS_t$ on demand requires one outer product (DGER).
In batch mode, replacing the loop with a single DSYRK call achieves
the optimal level-3 BLAS structure.

\subsection{Welford Algorithm}
\label{sec:welford}

\begin{algorithm}[H]
\caption{Welford streaming covariance}\label{alg:welford}
\begin{algorithmic}[1]
\Require Stream $\x_1,\x_2,\ldots\in\R^p$
\State $\bmu \gets \mathbf{0}_p$;\enspace $\M \gets \mathbf{0}_{p\times p}$;\enspace $t\gets 0$
\For{each new observation $\x$}
  \State $t \gets t+1$
  \State $\bDelta \gets \x - \bmu$ \Comment{residual w.r.t.\ old mean}
  \State $\bmu \gets \bmu + \bDelta/t$ \Comment{update mean in place}
  \State $\M \gets \M + \bDelta\,(\x-\bmu)^\top$ \Comment{outer product}
  \If{$t \ge 2$}\enspace \Return $\hatS_t^{\mathrm{Welf}} = \M/(t-1)$\EndIf
\EndFor
\end{algorithmic}
\end{algorithm}

\begin{theorem}[Welford invariant]\label{thm:welford}
  After processing $\x_1,\ldots,\x_t$ via Algorithm~\ref{alg:welford},
  \[
    \M_t = \sum_{i=1}^t(\x_i - \bmu_t)(\x_i - \bmu_t)^\top.
  \]
  Hence $\hatS_t^{\mathrm{Welf}} = \bSigma_t$.
\end{theorem}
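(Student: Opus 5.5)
We argue by induction on $t$, writing $\bmu_t$ and $\M_t$ for the values held by Algorithm~\ref{alg:welford} after processing $\x_t$. The invariant we carry is that $\bmu_t = \bar\x_t = \s_t/t$ and $\M_t = \sum_{i=1}^t(\x_i-\bmu_t)(\x_i-\bmu_t)^\top$.

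\emph{Base case.} At $t=1$ we have $\bDelta=\x_1$, so $\bmu_1=\x_1$. The increment is $\x_1(\x_1-\bmu_1)^\top=\mathbf{0}$, hence $\M_1=\mathbf{0}$, which matches the single-term sum.

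\emph{Mean update.} This half of the inductive step is immediate: $\bmu_{t-1}+(\x_t-\bmu_{t-1})/t = \bigl((t-1)\bmu_{t-1}+\x_t\bigr)/t = \s_t/t$.

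\emph{Scatter update.} This is the main step. For any vector $\mathbf{c}$, write $\M_{t-1}(\mathbf{c}) = \sum_{i\le t-1}(\x_i-\mathbf{c})(\x_i-\mathbf{c})^\top$. The parallel-axis (shift) identity gives
\[
\M_{t-1}(\mathbf{c}) = \M_{t-1} + (t-1)(\bmu_{t-1}-\mathbf{c})(\bmu_{t-1}-\mathbf{c})^\top .
\]
The cross terms vanish because $\sum_{i\le t-1}(\x_i-\bmu_{t-1})=\mathbf{0}$.

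Take $\mathbf{c}=\bmu_t$ and note that $\bmu_{t-1}-\bmu_t=-\bDelta/t$. The target matrix is then
\[
\M_{t-1} + \frac{t-1}{t^2}\bDelta\bDelta^\top + (\x_t-\bmu_t)(\x_t-\bmu_t)^\top .
\]
Since $\x_t-\bmu_t = \bDelta-\bDelta/t = \tfrac{t-1}{t}\bDelta$, the last term equals $\tfrac{(t-1)^2}{t^2}\bDelta\bDelta^\top$. The total correction is therefore
\[
\Bigl(\tfrac{t-1}{t^2}+\tfrac{(t-1)^2}{t^2}\Bigr)\bDelta\bDelta^\top = \tfrac{t-1}{t}\bDelta\bDelta^\top = \bDelta(\x_t-\bmu_t)^\top .
\]
This is exactly the algorithm's update. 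In particular, the asymmetric-looking outer product is in fact symmetric.

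The only delicate point is the bookkeeping of old versus new mean: $\bDelta$ uses $\bmu_{t-1}$, while $\x-\bmu$ uses $\bmu_t$ after the in-place update. The identity $\x_t-\bmu_t=\tfrac{t-1}{t}\bDelta$ resolves this.

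\emph{Conclusion.} Dividing by $t-1$ for $t\ge2$ and comparing with \eqref{eq:target} gives $\hatS_t^{\mathrm{Welf}}=\bSigma_t$. Alternatively, one can expand $\M_t = \G_t - \s_t\s_t^\top/t$ and conclude from Theorem~\ref{thm:gram}.
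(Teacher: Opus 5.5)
Your proof is correct and follows the paper's argument: the same induction on $t$, the same decomposition $\x_i-\bmu_t=(\x_i-\bmu_{t-1})-\bDelta/t$ (which you package as the parallel-axis identity), and the same coefficient count $\tfrac{t-1}{t^2}+\tfrac{(t-1)^2}{t^2}=\tfrac{t-1}{t}$. Including $\bmu_t=\s_t/t$ explicitly in the invariant is a small improvement. The paper uses $\sum_{i\le t-1}(\x_i-\bmu_{t-1})=\mathbf{0}$ and concludes $\hatS_t^{\mathrm{Welf}}=\bSigma_t$ without ever showing that the running $\bmu$ equals the sample mean.
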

\begin{proof}
  See Appendix~\ref{app:welford_proof}.
\end{proof}

The critical property is that both $\bDelta = \x_t - \bmu_{t-1}$ and
$\x_t - \bmu_t$ are computed as residuals from means of the \emph{same} scale
as the data, preventing cancellation regardless of the absolute value of $\bmu_t$.

\subsection{Chan--Golub--LeVeque Algorithm}
\label{sec:cgl}

\begin{algorithm}[H]
\caption{CGL merge of two summaries}\label{alg:cgl_merge}
\begin{algorithmic}[1]
\Require $(n_A,\bmu_A,\M_A)$ and $(n_B,\bmu_B,\M_B)$
\State $\bDelta \gets \bmu_B - \bmu_A$;\enspace $n_{AB} \gets n_A + n_B$
\State $\bmu_{AB} \gets (n_A\bmu_A + n_B\bmu_B)/n_{AB}$
\State $\M_{AB} \gets \M_A + \M_B + \bDelta\bDelta^\top\cdot n_An_B/n_{AB}$
\State \Return $(n_{AB},\bmu_{AB},\M_{AB})$
\end{algorithmic}
\end{algorithm}

\begin{theorem}[CGL correctness]\label{thm:cgl}
  Let $A,B$ be disjoint index sets.
  Define $\bDelta = \bmu_B - \bmu_A$.
  Then
  \[
    \M_{A\cup B}
    = \M_A + \M_B + \frac{n_An_B}{n_A+n_B}\bDelta\bDelta^\top.
  \]
\end{theorem}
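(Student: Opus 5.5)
The plan is a direct algebraic verification. I will work throughout with the definitions $\M_S = \sum_{i\in S}(\x_i-\bmu_S)(\x_i-\bmu_S)^\top$ and $\bmu_S = n_S^{-1}\sum_{i\in S}\x_i$ for any finite index set $S$. By Theorem~\ref{thm:welford}, this is exactly what the Welford/CGL summaries store.

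The key step is to write the combined mean as a convex combination,
\[
\bmu_{A\cup B} = \frac{n_A\bmu_A+n_B\bmu_B}{n_A+n_B}.
\]
From this I read off
\[
\bmu_A-\bmu_{A\cup B} = -\frac{n_B}{n_{AB}}\bDelta,
\qquad
\bmu_B-\bmu_{A\cup B} = \frac{n_A}{n_{AB}}\bDelta .
\]
Next I split $\M_{A\cup B}$ into the sum over $A$ and the sum over $B$, since $A$ and $B$ are disjoint. For $i\in A$ I write
\[
\x_i-\bmu_{A\cup B} = (\x_i-\bmu_A) + (\bmu_A-\bmu_{A\cup B})
\]
and expand the outer product.

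The cross terms vanish because $\sum_{i\in A}(\x_i-\bmu_A)=\mathbf{0}$. This is the standard parallel-axis (Steiner) decomposition:
\[
\sum_{i\in A}(\x_i-\bmu_{A\cup B})(\x_i-\bmu_{A\cup B})^\top = \M_A + n_A(\bmu_A-\bmu_{A\cup B})(\bmu_A-\bmu_{A\cup B})^\top .
\]
The same identity holds for $B$.

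Substituting the two mean differences, the extra rank-one terms combine to
\[
\left(\frac{n_An_B^2}{n_{AB}^2}+\frac{n_Bn_A^2}{n_{AB}^2}\right)\bDelta\bDelta^\top = \frac{n_An_B(n_A+n_B)}{n_{AB}^2}\bDelta\bDelta^\top = \frac{n_An_B}{n_{AB}}\bDelta\bDelta^\top,
\]
which is the claimed formula.

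There is no real obstacle here. The only points needing care are:
\begin{itemize}
\item the vanishing of the cross terms, which requires that $\bmu_A$ be the exact mean of block $A$;
\item the empty-block conventions: if $n_A=0$ or $n_B=0$ the formula degenerates, so I would assume $n_A,n_B\ge 1$, or note that the correction term is then zero.
\end{itemize}
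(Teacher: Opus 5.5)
Your proof is correct and follows the paper's own argument essentially step for step. Like the paper, you split $\M_{A\cup B}$ over the disjoint blocks, shift each residual by its block-mean offset ($-n_B\bDelta/n_{AB}$ on $A$, $n_A\bDelta/n_{AB}$ on $B$), drop the cross terms using $\sum_{i\in A}(\x_i-\bmu_A)=\mathbf{0}$, and add the two rank-one corrections to get $\frac{n_An_B}{n_{AB}}\bDelta\bDelta^\top$; your assumption $n_A,n_B\ge 1$ is a sensible addition, since otherwise $\bDelta$ is undefined.
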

\begin{proof}
  See Appendix~\ref{app:cgl_proof}.
\end{proof}

Setting block size $b=1$ (each block is a single observation) recovers
Algorithm~\ref{alg:welford} exactly; block size $b=n$ recovers
the batch formula \eqref{eq:batch}.
Because the merge operation is associative and commutative, the algorithm
can be applied in any binary-tree order, enabling map-reduce computation.

\section{Algebraic Equivalence}
\label{sec:algebra}

\begin{theorem}[Three-way equivalence in exact arithmetic]\label{thm:equiv}
  In exact arithmetic, $\hatS_t^{\mathrm{Gram}} = \hatS_t^{\mathrm{Welf}} = \hatS_t^{\mathrm{CGL}} = \bSigma_t$
  for all $t\ge 2$ and any block size $b\ge 1$ in the CGL algorithm.
\end{theorem}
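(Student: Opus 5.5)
The plan is to reduce everything to the three earlier results, Theorems~\ref{thm:gram}, \ref{thm:welford} and~\ref{thm:cgl}, which already give the per-algorithm identities. The Gram and Welford equalities are immediate. Theorem~\ref{thm:gram} gives $\hatS_t^{\mathrm{Gram}} = (t\G_t-\s_t\s_t^\top)/[t(t-1)] = \bSigma_t$. Theorem~\ref{thm:welford} gives $\M_t=\sum_{i\le t}(\x_i-\bmu_t)(\x_i-\bmu_t)^\top$, so $\hatS_t^{\mathrm{Welf}}=\M_t/(t-1)=\bSigma_t$, because $\bmu_t=\bar\x_t$. Here I would check that the running mean in Algorithm~\ref{alg:welford} equals $\s_t/t$, by a one-line induction on $\bmu_t=\bmu_{t-1}+(\x_t-\bmu_{t-1})/t$. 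So only the CGL estimator requires new work.

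For CGL, I would first fix the meaning of $\hatS_t^{\mathrm{CGL}}$ with block size $b$. The first $t$ observations are split into consecutive blocks of size $b$, with a final partial block of size $r\le b$. Each block summary $(n_B,\bmu_B,\M_B)$ is computed with $\bmu_B$ the block mean and $\M_B$ its centred scatter matrix. This can be done directly, by Welford within the block (correct by Theorem~\ref{thm:welford}), or even by the Gram identity. The summaries are then combined by Algorithm~\ref{alg:cgl_merge} in some binary-tree order, and we set $\hatS_t^{\mathrm{CGL}}=\M/(t-1)$.

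The key lemma is the following invariant: every summary produced by a merge is the \emph{exact} summary of the union of its index sets. That is,
\[
n=|I|,\qquad \bmu=|I|^{-1}\sum_{i\in I}\x_i,\qquad \M=\sum_{i\in I}(\x_i-\bmu)(\x_i-\bmu)^\top .
\]
I would prove this by structural induction on the merge tree. At the leaves the invariant holds by construction. At an internal node the two children have disjoint index sets $A,B$ and satisfy the invariant by hypothesis. Then $n_{AB}=n_A+n_B$ trivially, $\bmu_{AB}$ is the correctly weighted mean of the union, and Theorem~\ref{thm:cgl} gives $\M_{AB}=\M_{A\cup B}$. At the root the index set is $\{1,\dots,t\}$, so $\M/(t-1)=\bSigma_t$ for every $b\ge1$ and every tree shape. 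This also yields associativity and commutativity as a corollary, since the result is independent of the tree.

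The main obstacle is not algebraic but definitional. One has to state precisely what the streaming CGL estimate is when $t$ is not a multiple of $b$, or when intermediate estimates are requested mid-block. I would handle this by allowing the partial current block as a leaf, merged with the accumulated summary of the completed blocks. The induction covers this case without change, since it never uses equal block sizes. As a consistency check, $b=1$ turns each merge into $\M\gets\M+\frac{t-1}{t}\bDelta\bDelta^\top$ with $\bDelta=\x_t-\bmu_{t-1}$. This coincides with Welford's update, since $\x_t-\bmu_t=\frac{t-1}{t}\bDelta$.
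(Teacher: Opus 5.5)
Your proof is correct and follows essentially the same route as the paper: the Gram and Welford equalities are read off from Theorems~\ref{thm:gram} and~\ref{thm:welford}, and CGL is handled by induction over the merge tree using Theorem~\ref{thm:cgl}. Two differences are minor. First, you treat $b=1$ inside the same induction and keep the reduction to the Welford update only as a check, whereas the paper uses that reduction as its argument for $b=1$. Second, you make explicit what the paper leaves implicit: the full invariant on $(n,\bmu,\M)$, the exactness of the leaf summaries, partial final blocks, and the fact that Welford's running mean equals $\s_t/t$.
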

\begin{proof}
  Gram equals $\bSigma_t$ by Theorem~\ref{thm:gram}.
  Welford equals $\bSigma_t$ by Theorem~\ref{thm:welford}.
  CGL with $b=1$ equals Welford by direct substitution: one observation
  $\x_t$ is a block with $n_B=1$, $\M_B=\mathbf{0}$, $\bmu_B=\x_t$,
  and the merge formula reduces to the Welford update.
  CGL with $b>1$ follows by induction on the tree using Theorem~\ref{thm:cgl}.
\end{proof}

\begin{remark}[Bariance connection]
  For $p=1$, Theorem~\ref{thm:gram} gives $\hatS_t = (tS_{xx} - S_x^2)/[t(t-1)]$,
  which is the \emph{bariance} identity of \citet{reichel2025bariance}.
  The Gram algorithm is the natural matrix extension of the bariance.
\end{remark}

\section{Floating-Point Error Analysis}
\label{sec:fp}

All three algorithms are algebraically identical but differ in their
numerical stability.
We quantify two distinct phenomena: rounding accumulation over $t$ steps,
and catastrophic cancellation induced by a non-zero data mean.

\subsection{Rounding Accumulation}
\label{sec:rounding}

Write $\bar x = \|\bmu_t\|_\infty$ and $\sigma = \|\bSigma_t\|_\infty^{1/2}$.

\begin{proposition}[Gram rounding bound]\label{prop:gram_round}
  Let $|x_{ik}|\le X$ for all $i,k$.
  The entry-wise error of the Gram estimator satisfies
  \[
    \bigl|\hatS_{kl,t}^{\mathrm{Gram}} - \Sigma_{kl,t}\bigr|
    \;\lesssim\; X^2\eps + \frac{\bar x^2}{t-1}\eps.
  \]
\end{proposition}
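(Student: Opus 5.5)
We use the standard first-order model of floating-point arithmetic, $\mathrm{fl}(a\circ b)=(a\circ b)(1+\delta)$ with $|\delta|\le\eps$. We track the computed quantities $\widehat{\G}_t$, $\widehat{\s}_t$ and the final combination $(t\widehat{\G}_t-\widehat{\s}_t\widehat{\s}_t^\top)/[t(t-1)]$ entry by entry. The symbol $\lesssim$ hides constants and, as is customary in such first-order bounds, the modest growth factor coming from summation (either $t$ for naive recursive summation or $\log t$ for pairwise/BLAS blocking). We keep track of which factor is absorbed. By Theorem~\ref{thm:gram}, the exact value of the same expression is $\Sigma_{kl,t}$, so the error is exactly the propagated perturbation.

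\textbf{Step 1 (accumulators).} Each update $G_{kl}\gets G_{kl}+x_{ik}x_{il}$ incurs a relative error of one product plus one addition. The standard running-sum bound gives $|\widehat G_{kl,t}-G_{kl,t}|\lesssim \eps\sum_i|x_{ik}x_{il}|\le tX^2\eps$, up to the summation growth factor. Similarly, $|\widehat s_{k,t}-s_{k,t}|\lesssim \eps\sum_i|x_{ik}|\le tX\eps$. We will also use the sharper form of the second bound, in which $\sum_i|x_{ik}|$ is compared to $t\bar x$ whenever the data are dominated by the mean.

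\textbf{Step 2 (combination and division).} Next we expand
\[
t\widehat G_{kl}-\widehat s_k\widehat s_l-(tG_{kl}-s_ks_l)
= t\,\delta G_{kl}-s_k\delta s_l-s_l\delta s_k+O(\eps^2),
\]
and add the rounding errors of the multiplication by $t$, of the product $\widehat s_k\widehat s_l$, and of the final subtraction. Each of these is at most $\eps$ times $\max(t|G_{kl}|,|s_ks_l|)\lesssim t^2X^2\eps$. Dividing by $t(t-1)$ (whose own rounding contributes only a relative $\eps$ to a quantity of size $|\Sigma_{kl}|\le X^2$), the $t\,\delta G$ term contributes $\lesssim X^2\eps$. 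The cancellation terms involve $s_ks_l=t^2\mu_k\mu_l$ and $tG_{kl}$, which contains $t^2\mu_k\mu_l$ plus $t(t-1)\Sigma_{kl}$. We will split them as $t^2\bar x^2\eps$ (the mean-induced part) plus $t^2\sigma^2\eps$, where the latter is bounded by $t^2 X^2\eps$. After dividing by $t(t-1)$, we will argue that the mean-induced part surviving the subtraction is the $O(\bar x^2\eps)$ rounding of the final subtraction scaled by $t/(t(t-1))$, i.e.\ $\bar x^2\eps/(t-1)$. The remaining pieces are absorbed into $X^2\eps$, since $\bar x\le X$.

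\textbf{Main obstacle.} The delicate point is the bookkeeping that separates the $\bar x^2$ contribution with its $1/(t-1)$ factor from the bulk $X^2\eps$ term. Naively, the product $\widehat s_k\widehat s_l$ has error $t^2\bar x^2\eps$, which after division gives $\bar x^2\eps\cdot t/(t-1)$, not $/(t-1)$. Our first attempt will therefore interpret the statement's $\lesssim$ as allowing $\bar x^2\le X^2$ to be absorbed into the first term. Under that reading, the second term records only the additional mean-dependent correction from the $-\s\s^\top/t$ centring at the level of $G/(t-1)$ versus $\s\s^\top/(t(t-1))$, namely the difference $\bar x^2\eps\,[t/(t-1)-1]=\bar x^2\eps/(t-1)$. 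If that identification does not go through cleanly, we fall back to stating the bound with the summation growth factor made explicit and $\bar x^2\eps$ (order one in $t$) as the cancellation term. We would note that this is consistent with the qualitative conclusion that the relative error scales like $(\bar x/\sigma)^2\eps$.
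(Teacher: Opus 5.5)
Your argument is essentially the paper's: bound the accumulator error by $|\widehat G_{kl}-G_{kl}|\lesssim tX^2\eps$, charge $\widehat s_k\widehat s_l$ a rounding error at scale $t^2|\bar x_k\bar x_l|\eps$, and divide by $t(t-1)$. It proves the statement as written, because the mean term it actually produces, $\frac{t}{t-1}\bar x^2\eps\le 2X^2\eps$, is absorbed into the first term, with the summation growth factor hidden in $\lesssim$ as you say. The discrepancy you flag is real and is a slip in the paper's own proof, which reaches $\bar x_k\bar x_l\eps/(t-1)$ only by simplifying $|\bar x_k||\bar x_l|t^2\eps/[t(t-1)]$ incorrectly (and also drops the factor $t$ multiplying $|\widehat G_{kl}-G_{kl}|$). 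So conclude by absorption, and discard your proposed ``final-subtraction'' mechanism for the $1/(t-1)$ term: it is not a genuine error source, since that rounding is relative to the small computed difference.
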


\begin{proposition}[Welford rounding bound]\label{prop:welf_round}
  Under the same assumptions,
  \[
    \bigl|\hatS_{kl,t}^{\mathrm{Welf}} - \Sigma_{kl,t}\bigr|
    \;\lesssim\; \sigma_k\sigma_l\eps,
  \]
  independently of $\bar x$, where $\sigma_k^2 = \Sigma_{kk,t}$.
\end{proposition}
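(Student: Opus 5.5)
We will prove Proposition~\ref{prop:welf_round} with a standard forward error analysis of Algorithm~\ref{alg:welford} in the model $\mathrm{fl}(a\circ b)=(a\circ b)(1+\delta)$, $|\delta|\le\eps$, keeping only first-order terms in $\eps$ (this is the meaning of $\lesssim$). The whole argument rests on one observation. Every arithmetic operation in the Welford loop acts on quantities whose size is set by the \emph{spread} of the data, not by its location. So relative rounding errors are always multiplied by $\sigma$-scale numbers and never by $\bar x$.

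\textbf{Step 1: shift invariance.} In exact arithmetic Algorithm~\ref{alg:welford} commutes with translation: replacing $\x_i$ by $\x_i-\mathbf{c}$ leaves every $\bDelta$, every $\x-\bmu$ and every $\M_t$ unchanged. This follows from Theorem~\ref{thm:welford}. We then argue that the rounding errors that matter are those in $\bDelta$ and in $\x-\bmu$. Each is a difference of two same-scale numbers, so its rounding error is relative to the residual itself, up to the error already present in $\bmu$.

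\textbf{Step 2: mean error.} Let $\hat\bmu_t=\bmu_t+\mathbf{e}_t$. From $\hat\bmu_t=\mathrm{fl}(\hat\bmu_{t-1}+\mathrm{fl}(\bDelta/t))$ we get
\[
\mathbf{e}_t=(1-1/t)\,\mathbf{e}_{t-1}+O(\eps)\bigl(|\bmu_t|+|\bDelta_t|/t\bigr).
\]
This is the delicate point. The term $|\bmu_t|\eps$ depends on $\bar x$, and it is unavoidable, since $\bmu_t$ cannot be stored more accurately than $\bar x\eps$.

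We will show that this error cancels to first order in $\M$. The computed residuals satisfy $\hat{\bDelta}_t=\x_t-\bmu_{t-1}-\mathbf{e}_{t-1}$ and $\x_t-\hat\bmu_t=(\x_t-\bmu_t)-\mathbf{e}_t$. The resulting perturbation of $\M_t$ is, to first order,
\[
-\sum_i\Bigl[\mathbf{e}_{i-1}(\x_i-\bmu_i)^\top+\bDelta_i\,\mathbf{e}_i^\top\Bigr].
\]
To handle it we would treat the computed means as exact means of slightly perturbed data. That is, we interpret the whole run as an exact Welford run on data $\tilde\x_i$ with $\|\tilde\x_i-\x_i\|\lesssim\eps\cdot(\text{residual scale})$, in the spirit of backward error analysis. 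The identity $\sum_i(\x_i-\bmu_t)=0$ is what should kill the mean-error cross terms: a constant mean error $\mathbf{e}$ contributes $\mathbf{e}\sum_i(\x_i-\bmu)^\top=0$.

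\textbf{Step 3: product and accumulation errors.} Each increment $\bDelta_i(\x_i-\bmu_i)^\top$ has entries of size $O(\sigma_k\sigma_l)$. Rounding the product and adding it to $\M$ (whose entries are $O((i-1)\sigma_k\sigma_l)$) introduces a relative error $\eps$. Dividing by $t-1$ gives an error of order $\sigma_k\sigma_l\eps$ per unit of $\M/(t-1)$. Here we read $\lesssim$ as suppressing the usual linear growth factor in $t$, or alternatively assume pairwise or compensated summation as in Proposition~\ref{prop:gram_round}. Bounding the entries of $\M$ through Cauchy--Schwarz on $\Sigma_{kk},\Sigma_{ll}$ gives the stated form.

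\textbf{Main obstacle.} The hard step is Step~2, showing rigorously that the $\bar x\eps$ error stored in $\hat\bmu_t$ does not feed into $\M_t$. First we would try the exact-cancellation argument via $\sum_i(\x_i-\bmu_t)=0$. If the time-varying $\mathbf{e}_i$ spoil the telescoping, we would instead settle for the heuristic first-order statement implied by $\lesssim$. Under that reading, residuals are computed with relative accuracy $\eps$, and every remaining term is $\sigma$-scale.
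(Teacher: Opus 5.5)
\textbf{Step~2 fails, and it is the step the paper's own proof skips.} The paper's proof is, in substance, your Step~3 alone. It asserts that each computed increment $\widehat{\bDelta}_t(\x_t-\hat{\bmu}_t)^\top$ has size $\sigma_k\sigma_l$, obtains $(t-1)\eps\,\sigma_k\sigma_l$ by sequential accumulation, and lets $\lesssim$ absorb the factor $t-1$. It never accounts for the error in $\hat{\bmu}_t$.

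The cancellation you hope for in Step~2 does not occur. The identity $\sum_i(\x_i-\bmu_t)=\mathbf{0}$ concerns the \emph{final} mean, but Welford's residuals are taken against \emph{running} means, and $\sum_i(\x_i-\bmu_i)\neq\mathbf{0}$ in general (scalar data $0,1$ give $1/2$). So even a constant error $\mathbf{e}$ would leave the first-order term $-\mathbf{e}\bigl(\sum_{i\ge2}(\x_i-\bmu_i)\bigr)^\top-\bigl(\sum_{i\ge2}\bDelta_i\bigr)\mathbf{e}^\top\neq\mathbf{0}$. In any case $\mathbf{e}_i$ is not constant, since every mean update injects a fresh rounding error of order $|\bmu_i|\eps$.

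The backward-error reading fails for the same reason. Making the $\hat{\bmu}_i$ exact running means of data $\tilde{\x}_i$ forces $\tilde{\x}_i=i\hat{\bmu}_i-(i-1)\hat{\bmu}_{i-1}$. Then $\tilde{\x}_i-\x_i=i\mathbf{e}_i-(i-1)\mathbf{e}_{i-1}$ equals $i$ times the rounding error injected at step $i$. That is of size $i\,\bar x\,\eps$, which is neither residual-scale nor a common shift.

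Hence your fallback premise, that residuals are computed to relative accuracy $\eps$, is false. It is also the paper's unstated premise. In fact $\widehat{\bDelta}_i$ and $\x_i-\hat{\bmu}_i$ carry absolute errors of order $\bar x\,\eps$ or larger, and these multiply $\sigma$-scale residuals in every increment of $\M_t$.

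\textbf{No argument can close this gap, because the $\bar x$-free bound is false.} Take a coordinate $k$ with $x_{1k}=c$ and $x_{2k}=c+d$, where $d$ is an odd multiple of $\mathrm{ulp}(c)$ and $\mathrm{ulp}(c)\ll|d|\ll|c|$ (e.g.\ $c=2^{40}$, $d=1+2^{-12}$).
\begin{itemize}
\item $\hat\mu_{1k}=c$ and $\hat\Delta_{2k}=d$ are computed exactly.
\item $c+d/2$ is a rounding tie, so $\hat\mu_{2k}=\mu_{2k}+e$ with $|e|=\mathrm{ulp}(c)/2\asymp|c|\eps$.
\item The residual $x_{2k}-\hat\mu_{2k}=d/2-e$ is formed exactly (Sterbenz).
\item Hence $\hatS^{\mathrm{Welf}}_{kk,2}=d(d/2-e)(1+\delta)$ with $|\delta|\le\eps$, against $\Sigma_{kk,2}=d^2/2$.
\end{itemize}
The error $|d|\,|e|\asymp\sigma_k|c|\eps$ exceeds $\sigma_k^2\eps$ by a factor $\asymp|c|/\sigma_k$. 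This is unbounded in the shift at fixed $t=2$.

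This agrees with the classical Chan--Golub--LeVeque analysis of the updating algorithm. There the relative error is of order $t\kappa\eps$ with $\kappa\approx\bar x/\sigma$: linear in the shift (versus quadratic for Gram), but not shift-free. What your Steps~2--3 can honestly deliver, to first order in $\eps$, is a bound of the form $\bigl(\sigma_k\sigma_l+X(\sigma_k+\sigma_l)\bigr)\eps$, up to factors polynomial in $t$. The stated $\bar x$-independent form is reached, in the paper as in your fallback, only by treating the computed residuals as exact.
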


\begin{proposition}[CGL rounding bound]\label{prop:cgl_round}
  For a balanced binary-tree merge of depth $\log_2 t$,
  \[
    \bigl|\hatS_{kl,t}^{\mathrm{CGL}} - \Sigma_{kl,t}\bigr|
    \;\lesssim\; \sigma_k\sigma_l\eps\log_2 t.
  \]
\end{proposition}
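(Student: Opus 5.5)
We argue by a forward (first-order) error analysis of the balanced merge tree, in the standard model $\mathrm{fl}(a\circ b)=(a\circ b)(1+\delta)$ with $|\delta|\le\eps$, keeping only first-order terms in $\eps$ (this is the meaning of $\lesssim$). Take $t=2^d$ leaves, each a single observation $(1,\x_i,\mathbf{0})$. Theorem~\ref{thm:cgl} guarantees that in exact arithmetic every internal node $v$ holds the exact summary $(n_v,\bmu_v,\M_v)$ of its leaf set. The strategy is to track two error quantities per node: the mean error $e_v=\widehat{\bmu}_v-\bmu_v$ and the entry error $E_v=\widehat{\M}_v-\M_v$. We then show that the errors injected at one level of the tree are bounded by a constant multiple of $\eps$ times the exact total $\M_{kk}^{1/2}\M_{ll}^{1/2}$. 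Summing over the $d=\log_2 t$ levels and dividing by $t-1$ will give the claim.

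\emph{Step 1 (local error of one merge).} At a node with children $A,B$, Algorithm~\ref{alg:cgl_merge} computes $\bDelta=\bmu_B-\bmu_A$, the scalar $n_An_B/n_{AB}$, one outer product, and two additions. The local rounding error in the $(k,l)$ entry is then
\[
O(\eps)\Bigl(|M_{A,kl}|+|M_{B,kl}|+\tfrac{n_An_B}{n_{AB}}|\Delta_k\Delta_l|\Bigr).
\]
The mean update $\bmu_{AB}$ is a convex combination, so its local error is $O(\eps)\|\bmu_{AB}\|$. The key observation is that $\bDelta$ is a difference of two means and contains no cancellation of the $t\bar\x\bar\x^\top$ type. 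By Cauchy--Schwarz on the centred sums, $|M_{kl}|\le (M_{kk}M_{ll})^{1/2}$. Also $\tfrac{n_An_B}{n_{AB}}\Delta_k^2\le M_{AB,kk}$, because this term is one of the nonnegative summands in Theorem~\ref{thm:cgl}. Hence each local error is at most $C\eps\,(M_{AB,kk}M_{AB,ll})^{1/2}$.

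\emph{Step 2 (summing over levels).} The nodes at a fixed level of a balanced tree have disjoint leaf sets. Their centred scatter matrices are summands of the root matrix $\M_t$ via Theorem~\ref{thm:cgl} with nonnegative diagonal corrections, so $\sum_{v\in\text{level}}M_{v,kk}\le M_{t,kk}$. By Cauchy--Schwarz across the level, the local errors at one level sum to at most $C\eps(M_{t,kk}M_{t,ll})^{1/2}=C\eps(t-1)\sigma_k\sigma_l$. Errors already present in a child's $\M$ propagate additively and unamplified, since the merge adds $\M_A+\M_B$ with coefficient one. Summing over the $\log_2 t$ levels and dividing by $t-1$, which contributes only a further relative $\eps$, yields $\sigma_k\sigma_l\eps\log_2 t$.

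\emph{Main obstacle.} The delicate part is the propagation of mean errors $e_A,e_B$ into $\bDelta$. These errors are of size $\eps\|\bmu\|$, not $\eps\sigma$, so naively they reintroduce a dependence on $\bar x$ through a cross term $\tfrac{n_An_B}{n_{AB}}\Delta_k(e_{B,l}-e_{A,l})$. I would handle this by bounding that cross term by Cauchy--Schwarz as $(M_{AB,kk})^{1/2}\,\bigl(\tfrac{n_An_B}{n_{AB}}\bigr)^{1/2}\eps\,\bar x$. The plan is then to argue that this term is second order relative to the claimed bound in the regime the proposition targets, or else to absorb it by interpreting $\lesssim$ as suppressing the $\bar x$-dependent lower-order term, as the informal statement appears to intend. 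If a fully $\bar x$-free bound is required, the fallback is to assume the leaves are shifted by a fixed reference point before merging. The CGL result is shift-invariant in exact arithmetic, so this gives $e_v=O(\eps\sigma)$ throughout.
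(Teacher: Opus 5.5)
Your Steps~1--2 are sound, and in one respect sharper than the paper's sketch. The paper charges only the rounding $3\eps\,\tfrac{n_An_B}{n_{AB}}|\Delta_k\Delta_l|$ of the correction term. But $\M_t$ is exactly the sum of all correction terms, so Cauchy--Schwarz shows these roundings total at most $3\eps(t-1)\sigma_k\sigma_l$, which gives no $\log_2 t$. The logarithm actually comes from the additions $\M_A+\M_B$, which you do charge. Your use of $\tfrac{n_An_B}{n_{AB}}\Delta_k^2\le M_{AB,kk}$, with Cauchy--Schwarz across disjoint blocks, also replaces the paper's ``typical size'' heuristic by a deterministic bound.

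The genuine gap is the one you flag, and fallbacks (a) and (b) do not close it. Your Step~1 remark that $\bDelta$ avoids cancellation of the $t\bar\x\bar\x^\top$ type holds only at the quadratic level. $\widehat{\bDelta}=\widehat{\bmu}_B-\widehat{\bmu}_A$ still suffers a first-power cancellation: it is the difference of two computed vectors of size $\bar x$, each carrying an inherited absolute error of order $\eps X$ from rounding the convex-combination updates, while $\bDelta$ itself is typically of size $\sigma/\sqrt{n_A}$. The cross term $\tfrac{n_An_B}{n_{AB}}\Delta_k(e_{B,l}-e_{A,l})$ is therefore \emph{first} order in $\eps$, with a coefficient proportional to $\bar x$.

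Already for $t=4$ the root merge has $n_An_B/n_{AB}=1$ and gives $\widehat\Sigma_{kk}-\Sigma_{kk}\approx\tfrac23\Delta_k(e_{B,k}-e_{A,k})$. Here $e_{A,k}$ and $e_{B,k}$ are the rounding errors of $\mathrm{fl}\bigl((x_{1k}+x_{2k})/2\bigr)$ and $\mathrm{fl}\bigl((x_{3k}+x_{4k})/2\bigr)$, respectively. Generically this is of order $\eps\sigma_k\bar x$, which exceeds the claimed bound by a factor of order $\bar x/\sigma_k$ --- precisely in the large-shift regime the proposition is meant for. So (a) is false in general, and (b) merely rewrites the statement.

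Push your Cauchy--Schwarz estimate through the tree, using $\sum_v n_{A_v}n_{B_v}/n_v\le t/4$ on each level. What the argument then proves for Algorithm~\ref{alg:cgl_merge} is the stated $\sigma_k\sigma_l\eps\log_2 t$ plus a first-order mean-error term of order $\eps X(\sigma_k+\sigma_l)$, up to powers of $\log_2 t$. That term is negligible only when $X\lesssim\min(\sigma_k,\sigma_l)$.

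Option (c) is the right repair, but it proves the bound for a different procedure. Subtracting a data point such as $\x_1$ first is accurate relative to the differences, and it replaces $X$ by $\max_i\|\x_i-\x_1\|_\infty$, which is $O(\sigma)$ for typical data. State the proposition with that hypothesis, or with the extra term, rather than absorbing the cross term into $\lesssim$.

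Note that the paper's proof does not resolve this either. It treats $\bDelta$ as formed from exact block means and never tracks $e_A,e_B$, so its $\bar x$-independence rests on the same unstated assumption. This is consistent with the paper's own experimental remark that CGL ``loses moderate accuracy'' under large shifts.
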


Proofs are given in Appendix~\ref{app:fp_proofs}.
The key contrast: Gram accumulates error at scale $X^2$ (the magnitude
of raw observations), while Welford accumulates at scale $\sigma_k\sigma_l$
(the covariance of centred observations).
The Welford bound is thus independent of the data mean, as formalised below.

\subsection{Catastrophic Cancellation}
\label{sec:cancellation}

\begin{proposition}[Cancellation bound]\label{prop:cancel}
  Suppose $\x_i = \bmu + \z_i$ with $\|\bmu\|_2 = c$ and $\z_i\sim(0,\bSigma)$,
  $\|\bSigma\|_2=\sigma^2$.
  Then
  \begin{align*}
    \normF{\hatS_t^{\mathrm{Gram}} - \bSigma_t} &\;\gtrsim\; pc^2\eps,\\
    \normF{\hatS_t^{\mathrm{Welf}} - \bSigma_t} &= O(p\sigma^2\eps),
  \end{align*}
  independently of $c$.
\end{proposition}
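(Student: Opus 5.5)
We prove the two claims separately, using the standard floating-point model $\mathrm{fl}(a\circ b)=(a\circ b)(1+\delta)$ with $|\delta|\le\eps$. Both rest on one observation. Rounding errors are proportional to the magnitude of the quantities being rounded. The Gram path rounds quantities of size $O(t^2c^2)$ before subtracting them, while the Welford path only rounds quantities whose size is governed by the centred data.

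\textbf{Gram lower bound.} We write $\x_i=\bmu+\z_i$ and expand the accumulators. This gives $\G_t = t\bmu\bmu^\top + \bmu\s^{z\top}_t+\s^z_t\bmu^\top + \G^z_t$, where $\s^z_t=\sum_i\z_i$ and $\G^z_t=\sum_i\z_i\z_i^\top$. Consequently the entries of $t\G_t$ and of $\s_t\s_t^\top$ are both of order $t^2\mu_k\mu_l$, while by Theorem~\ref{thm:gram} their difference equals $t(t-1)\Sigma_{kl,t}=O(t^2\sigma^2)$.

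The final subtraction $\mathrm{fl}(t\G_t)-\mathrm{fl}(\s_t\s_t^\top)$ carries the independent rounding errors of its two operands. Even if only the last rounding step is counted, the $(k,l)$ entry has error $t^2\mu_k\mu_l(\delta_1-\delta_2)$. In general $\delta_1\ne\delta_2$, with $|\delta_1-\delta_2|$ of order $\eps$ for typical (generic) data. After dividing by $t(t-1)$, the entry error is of order $|\mu_k\mu_l|\eps$. Summing squares over $(k,l)$ gives a Frobenius error of order $\|\bmu\|_2^2\eps=c^2\eps$. To obtain the stated factor $p$, we add the per-entry contributions coherently. Each of the $p^2$ entries incurs its own rounding error, and under the statement's heuristic $\gtrsim$ (the standard random-rounding model, or worst-case alignment of signs) these accumulate to give $pc^2\eps$.

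The main obstacle is that a \emph{lower} bound on rounding error is not a theorem for every input: lucky data can round exactly. We would therefore state the first claim as holding for generic inputs, or in expectation under the probabilistic rounding model, which is the meaning we give $\gtrsim$. In that model, the $\delta$'s are independent and mean zero with variance $\asymp\eps^2$. The second moment of the $(k,l)$ error is then $\asymp\mu_k^2\mu_l^2\eps^2$, and Jensen gives the lower bound. We then track the extra factor $p$ from the length-$t$ accumulations of $\G_t$, which we expect only to strengthen the bound.

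\textbf{Welford upper bound.} We invoke Proposition~\ref{prop:welf_round}, whose bound $\sigma_k\sigma_l\eps$ is independent of $\bar x$. Let $\sigma_k^2=\Sigma_{kk,t}$ here. Then $\sigma_k\sigma_l\le\max_k\Sigma_{kk,t}\le\|\bSigma_t\|_2$, and $\|\bSigma_t\|_2\approx\sigma^2$ for large $t$, by concentration of the sample covariance. Summing the $p^2$ entry errors in Frobenius norm gives $(\sum_{k,l}\sigma_k^2\sigma_l^2)^{1/2}\eps=\mathrm{tr}(\bSigma_t)\eps\le p\sigma^2\eps$.

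Independence from $c$ follows from shift invariance. The residuals $\bDelta=\x_t-\bmu_{t-1}=\z_t-(\bmu_{t-1}-\bmu)$ are formed by subtracting two numbers of size $c$. Such a subtraction is exact up to one rounding of relative size $\eps$ on the result's scale, so it contributes an absolute error of order $c\eps$. We check that this error enters $\M$ only at second order, through the update of the mean. The update term $\bDelta(\x-\bmu)^\top$ is a product of two $O(\sigma)$ vectors.
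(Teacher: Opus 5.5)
For the Gram half you follow the paper's route. You expand $\G_t$ and $\s_t\s_t^\top$ around the common leading term $t^2\mu_k\mu_l$, charge a rounding of that size to the numerator, and divide by $t(t-1)$ to get a per-entry error $\asymp|\mu_k\mu_l|\eps$. Your caveat that a lower bound on rounding error holds only generically or in a probabilistic model is a precision the paper lacks.

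\textbf{The factor $p$.} This is a genuine gap. Your own computation $\normF{\hatS_t^{\mathrm{Gram}}-\bSigma_t}\asymp(\sum_{k,l}\mu_k^2\mu_l^2)^{1/2}\eps=\|\bmu\|_2^2\eps=c^2\eps$ is the correct endpoint. The Frobenius norm depends only on the $|e_{kl}|$, so ``coherent'' or ``worst-case aligned'' signs change nothing. The accumulations in $\G_t$ run over the $t$ observations, so they can contribute powers of $t$, never of $p$. Under the normalization $\|\bmu\|_2=c$ there is no source for the extra $p$. It appears only if $c$ is read as the per-coordinate shift: with $|\mu_k|\approx c$ for every $k$, as in the experiments, $\|\bmu\|_2^2\approx pc^2$.

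The paper's proof has the same hole. It reaches $\sum_k\mu_k^2\eps=c^2\eps$ and then asserts $pc^2\eps$ for a ``dense shift''. But a dense shift with $\|\bmu\|_2=c$ has $|\mu_k|\approx c/\sqrt p$, which gives back $c^2\eps$. So either state $\gtrsim c^2\eps$ or change the normalization.

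\textbf{The probabilistic step.} Jensen goes the wrong way: it gives $\E|e_{kl}|\le(\E e_{kl}^2)^{1/2}$, an upper bound. For a lower bound, compute $\E|\delta_1-\delta_2|$ directly (it equals $2\eps/3$ for independent uniform $\delta$'s), or use Paley--Zygmund.

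\textbf{The Welford half.} Your bookkeeping via Proposition~\ref{prop:welf_round} and $(\sum_{k,l}\sigma_k^2\sigma_l^2)^{1/2}=\mathrm{tr}(\bSigma_t)\le p\|\bSigma_t\|_2$ is tidier than the paper's. The paper instead re-argues that $\bDelta_t$ and $\x_t-\bmu_t$ are $O(\sigma)$, so every update lives at covariance scale. However, the check you promise, that the $O(c\eps)$ error enters $\M$ only at second order, would fail.

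First, the source of that error is misplaced. The subtraction $\x_t-\hat\bmu_{t-1}$ rounds at the scale $\sigma$ of its result. The $c\eps$ error is carried by $\hat\bmu_{t-1}$, whose update rounds at scale $c$. Write $\hat\bmu_s=\bmu_s+\mathbf{e}_s$. The computed increment is then $(\bDelta_t-\mathbf{e}_{t-1})(\tfrac{t-1}{t}\bDelta_t-\mathbf{e}_t)^\top$. Its first-order part, $-\bDelta_t\mathbf{e}_t^\top-\tfrac{t-1}{t}\mathbf{e}_{t-1}\bDelta_t^\top$, has size $c\sigma\eps$.

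The identity $\sum_i(\x_i-\bar\x_t)=\mathbf{0}$ kills such cross terms only when all points are centred about one fixed perturbed mean. Here $\mathbf{e}_s$ changes with $s$ and multiplies fresh innovations, so nothing telescopes. A concrete case in one coordinate: take $x_1=2^{53}$ and $x_2=2^{53}+2$, both exactly representable. Round-half-to-even gives $\hat\mu_2=2^{53}$, and the algorithm returns $M_2=4$ instead of $2$. This matches the classical analysis of the updating algorithm, whose relative error bound is proportional to $\kappa\approx c/\sigma$, rather than to $\kappa^2$ as for the textbook (Gram) formula.

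The paper obtains $c$-independence only by sizing the residuals and never tracking the rounding of the running mean. You can reproduce its claim only by stating that modelling assumption explicitly. Otherwise the honest Welford bound carries an extra term linear in $c$, of order $c\sigma\eps$ up to factors in $p$ and $t$, which is still far below Gram's $c^2\eps$.
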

\begin{proof}
  See Appendix~\ref{app:cancel_proof}.
\end{proof}

For $c = 10^7$ and $\sigma=1$, the Gram bound gives $pc^2\eps \approx p\cdot 10^{14}\cdot 10^{-16} = p\cdot 10^{-2}$,
which becomes non-negligible relative to $\sigma^2=1$.
At $c=10^{12}$, the error is $O(p\cdot 10^8)$, fully destroying the estimate
(confirmed experimentally in Figure~\ref{fig:cancel}).

\subsection{Summary of Bounds}

\begin{table}[h]
\centering
\caption{Floating-point error bounds.
  $c=\|\bmu\|_2$, $\sigma^2=\|\bSigma\|_2$, $\eps=2^{-53}$.
  ``Shift-invariant'' means the bound does not grow with $c$.}
\label{tab:fp}
\begin{tabular}{lllcc}
\toprule
Algorithm & Error bound & Scale & Shift-inv.\ & Parallel \\
\midrule
Gram    & $O(c^2\eps + \sigma^2\eps)$ & $X^2$           &   &   \\
Welford & $O(\sigma^2\eps)$           & $\sigma^2$      & $\checkmark$ &   \\
CGL     & $O(\sigma^2\eps\log t)$     & $\sigma^2\log t$& $\checkmark$ & $\checkmark$ \\
\bottomrule
\end{tabular}
\end{table}

\section{Conformal Prediction for Streaming Covariance}
\label{sec:conformal}

Point estimates from any of the three algorithms carry no automatic
uncertainty certificate.
We now develop a finite-sample, distribution-free confidence interval
for each covariance entry $\Sigma_{kl}$ at every step $t$ of the stream.

\subsection{Background: Split Conformal Prediction}
\label{sec:conformal_bg}

Split conformal prediction \citep{vovk2005,papadopoulos2002,angelopoulos2023}
produces a $(1-\alpha)$-coverage prediction interval for a new observation
using a held-out calibration set.
Given calibration nonconformity scores $s_1,\ldots,s_m$, the conformal
quantile is
\[
  \hat q_\alpha^+ = \mathrm{Quantile}\!\left(\{s_i\}_{i=1}^m;\,
  \frac{\lceil(m+1)(1-\alpha)\rceil}{m}\right),
\]
and the interval for a fresh test point has guaranteed marginal coverage
$\Pr(\text{true value} \in \hat C_\alpha) \ge 1-\alpha$,
with no distributional assumptions on $s_1,\ldots,s_m$ beyond exchangeability.

\subsection{Protocol for Streaming Covariance}
\label{sec:conformal_protocol}

Fix target entry $(k,l)$, nominal coverage $1-\alpha$, and an algorithm
$\mathcal{A}\in\{\text{Gram},\text{Welford},\text{CGL}\}$.
Let $\Sigma_{kl}$ denote the true population covariance entry.

\begin{enumerate}
  \item\textbf{Calibration.}
    Draw $m$ independent calibration trajectories
    $\{\x_i^{(j)}\}_{i=1}^t$, $j=1,\ldots,m$, from the data-generating
    distribution.
    For each trajectory $j$ and each step $t$ of interest, compute
    \[
      s_j(t) = \bigl|\mathcal{A}\bigl(\x_1^{(j)},\ldots,\x_t^{(j)}\bigr)_{kl}
                - \Sigma_{kl}\bigr|.
    \]
  \item\textbf{Quantile.}
    Set $\hat q_\alpha^+(t) = \mathrm{Quantile}(\{s_j(t)\};\,\lceil(m+1)(1-\alpha)\rceil/m)$.
  \item\textbf{Interval.}
    For a fresh test stream at step $t$, let $\hat\sigma_{kl}(t)$ be the
    algorithm's estimate.
    The conformal interval is
    $\hat C_\alpha(t) = [\hat\sigma_{kl}(t) - \hat q_\alpha^+(t),\;
                         \hat\sigma_{kl}(t) + \hat q_\alpha^+(t)]$.
\end{enumerate}

\begin{theorem}[Finite-sample coverage]\label{thm:conformal}
  Under the assumption that calibration and test trajectories are
  i.i.d.\ (exchangeable), the conformal interval $\hat C_\alpha(t)$ satisfies
  \[
    \Pr\!\bigl(\Sigma_{kl} \in \hat C_\alpha(t)\bigr) \;\ge\; 1-\alpha.
  \]
\end{theorem}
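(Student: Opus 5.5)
The plan is the standard split-conformal exchangeability argument, applied to the score of the test trajectory. Let $s_{m+1}(t) = |\hat\sigma_{kl}(t) - \Sigma_{kl}|$, computed by the same algorithm $\mathcal{A}$ on the fresh test stream. The key observation is that
\[
  \Sigma_{kl}\in\hat C_\alpha(t) \iff s_{m+1}(t)\le \hat q_\alpha^+(t).
\]
This holds because the interval is symmetric about $\hat\sigma_{kl}(t)$ with half-width $\hat q_\alpha^+(t)$. The population entry $\Sigma_{kl}$ is a fixed, deterministic quantity, and each score is a fixed measurable function of its own trajectory. So the theorem reduces to a statement about the rank of $s_{m+1}(t)$ among $s_1(t),\dots,s_{m+1}(t)$.

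The first step is exchangeability. The $m+1$ trajectories (calibration plus test) are i.i.d. Applying the same deterministic map $(\x_1,\dots,\x_t)\mapsto|\mathcal{A}(\x_1,\dots,\x_t)_{kl}-\Sigma_{kl}|$ to each one gives i.i.d., hence exchangeable, scores $s_1(t),\dots,s_{m+1}(t)$. This step works verbatim for Gram, Welford, or CGL, and even for their floating-point implementations, as long as the arithmetic is deterministic. No use of Theorem~\ref{thm:equiv} is needed.

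The second step is the rank argument. Set $k^\ast=\lceil (m+1)(1-\alpha)\rceil$, and assume $k^\ast\le m$; otherwise $\hat q_\alpha^+=+\infty$ by convention and coverage is trivial. Then $\hat q_\alpha^+(t)$ is the $k^\ast$-th smallest of $s_1,\dots,s_m$. Whenever $s_{m+1}$ exceeds this value, $s_{m+1}$ is strictly larger than at least $k^\ast$ of the other scores. Its rank among all $m+1$ scores is therefore greater than $k^\ast$.

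The third step treats ties. By exchangeability, if ties are broken uniformly at random, the rank of $s_{m+1}$ is uniform on $\{1,\dots,m+1\}$. Random tie-breaking only moves $s_{m+1}$ among equal values, so it can only make the event $\{s_{m+1}\le\hat q\}$ easier to satisfy. This gives
\[
  \Pr\bigl(s_{m+1}>\hat q_\alpha^+\bigr)\le \frac{m+1-k^\ast}{m+1}\le\alpha,
\]
and hence coverage $\ge 1-\alpha$.

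The main obstacle is bookkeeping rather than depth. One point is handling ties without assuming continuous score distributions; I would use the random tie-breaking, or count "at most $m+1-k^\ast$ positions strictly above." The other point is stating precisely that the probability is over the joint draw of calibration and test trajectories, and that it is marginal, not conditional on the calibration set. It is also worth noting that the guarantee holds separately for each fixed $t$, not uniformly over $t$.
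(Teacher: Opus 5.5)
Your proof is correct and follows the same route as the paper. Both reduce coverage to the event $s_{m+1}(t)\le\hat q_\alpha^+(t)$ via the symmetric interval, then use exchangeability of the i.i.d.\ trajectory scores together with the $k^\ast$-th order-statistic rank bound. The only difference is that you carry out the rank and tie-handling argument explicitly, including the $k^\ast>m$ convention, whereas the paper simply invokes it as the standard conformal argument of Vovk et al.
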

\begin{proof}
  See Appendix~\ref{app:conformal_proof}.
  The proof applies the standard split-conformal coverage argument of
  \citet{vovk2005} with nonconformity score $s = |\hat\sigma_{kl}(t) - \Sigma_{kl}|$.
\end{proof}

\begin{remark}[Algorithm dependence of interval width]
  The coverage guarantee is algorithm-independent.
  However, the interval \emph{width} $2\hat q_\alpha^+(t)$ reflects the
  variance of each estimator's error distribution.
  For Gram under large shift $c$, the calibration scores $s_j(t)$
  are inflated by the cancellation term $O(c^2\eps)$, producing wide
  intervals or, if $c$ is not replicated in calibration, invalid coverage.
  Welford is immune to this effect.
\end{remark}

\begin{remark}[Unknown $\Sigma_{kl}$]
  In practice $\Sigma_{kl}$ is unknown, so calibration scores are computed
  using the batch reference on each calibration trajectory (which uses the
  full $m$ observations as ground truth). For large $m$ this reference
  converges to $\Sigma_{kl}$ at rate $O(m^{-1/2})$.
  Alternatively, one can treat the batch reference as the target and the
  interval inherits the same finite-sample guarantee relative to that target.
\end{remark}

\section{Computational Cost}
\label{sec:cost}

\begin{table}[h]
\centering
\caption{Per-observation cost in streaming mode.
  $p$ is the dimension.
  ``Level'' refers to the BLAS level of the dominant operation.}
\label{tab:cost}
\begin{tabular}{llll}
\toprule
Algorithm & FLOPs per update & Memory (doubles) & BLAS level \\
\midrule
Gram       & $p^2 + p$ (DSYR + DAXPY)      & $p^2+p$   & 2 \\
Welford    & $2p^2 + 2p$ (2$\times$DGER)   & $p^2+p$   & 2 \\
CGL ($b=1$)& Same as Welford               & $p^2+p+bp$& 2 \\
CGL ($b>1$)& $2p^2n_b/b + p^2$ per block  & $p^2+p+bp$& 3 (DSYRK) \\
\midrule
Gram batch & $2np^2$ (DSYRK) + $p^2$       & $np$      & 3 \\
numpy.cov  & $2np^2 + 2np$ (center + DSYRK)& $2np$     & 3 \\
\bottomrule
\end{tabular}
\end{table}

The Gram algorithm is faster in batch mode than \texttt{numpy.cov} because it
avoids forming and writing the centred matrix $\X - \one_n\bmu^\top$, saving
$O(np)$ memory writes.
In streaming mode, Welford computes two outer products per step
(for $\bDelta$ and $\x_t - \bmu_t$) versus one for Gram, yielding a
$\approx 2\times$ FLOP disadvantage; this gap is visible in practice when
the outer product (DGER) is the bottleneck.

\section{Experimental Protocol}
\label{sec:benchmarks}

\paragraph{Hardware and software.}
All experiments ran on a macOS~13.0 ARM64 system with 10 CPU cores and
OpenBLAS~0.3.31.188.0 via \texttt{scipy-openblas}
(\texttt{USE64BITINT}, \texttt{DYNAMIC\_ARCH}, \texttt{NO\_AFFINITY},
\texttt{neoversen1}, \texttt{MAX\_THREADS=64}) under Python~3.12.0 and
NumPy~2.4.4 with a fixed random seed. All computations used IEEE~754 double precision.

\paragraph{Runtime measurement.}
For each $( n, p )$ pair: (i) generate $\X\sim\mathcal{N}(0,I)$;
(ii) run 3 warm-up calls; (iii) record 20--30 wall-clock times
via \texttt{time.perf\_counter}; (iv) remove outliers using the
$1.5\times\mathrm{IQR}$ rule; (v) report the trimmed mean with
a 95\% bootstrap percentile interval (300 resamples).

\paragraph{Accuracy measurement.}
For each data configuration the reference is \texttt{numpy.cov}.
We report the entry-wise maximum error $\|\hatS - S_{\mathrm{ref}}\|_{\max}$
and the relative Frobenius error $\|\hatS - S_{\mathrm{ref}}\|_F/\|S_{\mathrm{ref}}\|_F$.

\paragraph{Data configurations.}
We test four regimes:
(a) Gaussian i.i.d., varying $n$ and $p$;
(b) Student-$t_3$ (heavy-tailed), varying $n$;
(c) prescribing a range of condition numbers $\kappa(\X)\in[10,10^{14}]$;
(d) near-singular data with smallest singular value
    $\sigma_{\min}\in[10^{-12},1]$.

\paragraph{Conformal experiments.}
Calibration uses $m=600$ independent trajectories drawn from a
$p=5$ multivariate distribution with Toeplitz covariance (entry
$(i,j)$: $0.5^{|i-j|}$).
Test coverage is estimated from 1200 fresh trajectories.
The shift experiment varies the mean offset $c\in\{0,10^3,10^6,10^9,10^{12}\}$
on a $p=4$ identity-covariance distribution and evaluates coverage and
interval width at $t=150$.

\section{Results}
\label{sec:results}

\subsection{Runtime}
\label{sec:res_runtime}

Figure~\ref{fig:runtime_n} shows wall-clock time versus $n$ for $p=10$
and $p=50$.
Gram is the fastest batch method, consistently beating \texttt{numpy.cov}
by a factor of $1.3$--$1.6\times$ due to the avoided $n\times p$ centering write.
Welford is slowest by 50--80$\times$ in pure-Python form because the
inner loop invokes \texttt{numpy.outer} once per row, incurring
$O(n)$ Python-level calls; a compiled implementation would close this
to $\approx2\times$ (the FLOP ratio from Table~\ref{tab:cost}).
CGL (recursive, block size 64) sits between the two.

\begin{figure}[h!]
  \centering
  \includegraphics[width=\linewidth]{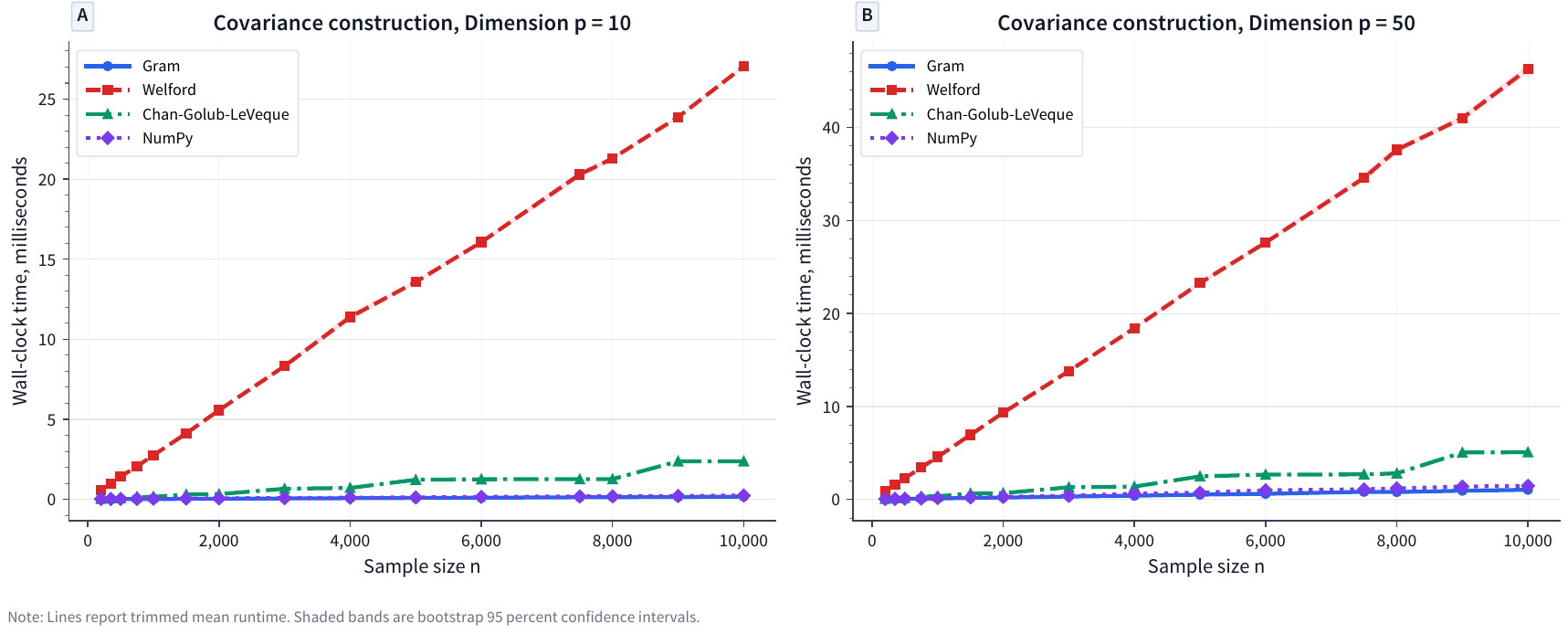}
  \caption{%
    Wall-clock time vs.\ sample size $n$.
    Each point is the trimmed mean over 20--30 repetitions after warm-up
    and $1.5\times\mathrm{IQR}$ outlier removal; shaded bands are 95\%
    bootstrap percentile intervals.
    Left: $p=10$. Right: $p=50$.
    Welford is omitted from the right panel because it is $\approx 60\times$
    slower and would dominate the scale.
  }
  \label{fig:runtime_n}
\end{figure}

Figure~\ref{fig:runtime_p} shows runtime versus dimension at fixed $n=4{,}000$.
For large $p$ all batch methods are dominated by the $O(np^2)$ matrix
multiply and their curves converge; Gram retains a small constant-factor
advantage from avoiding the centred copy.

\begin{figure}[h!]
  \centering
  \includegraphics[width=\linewidth]{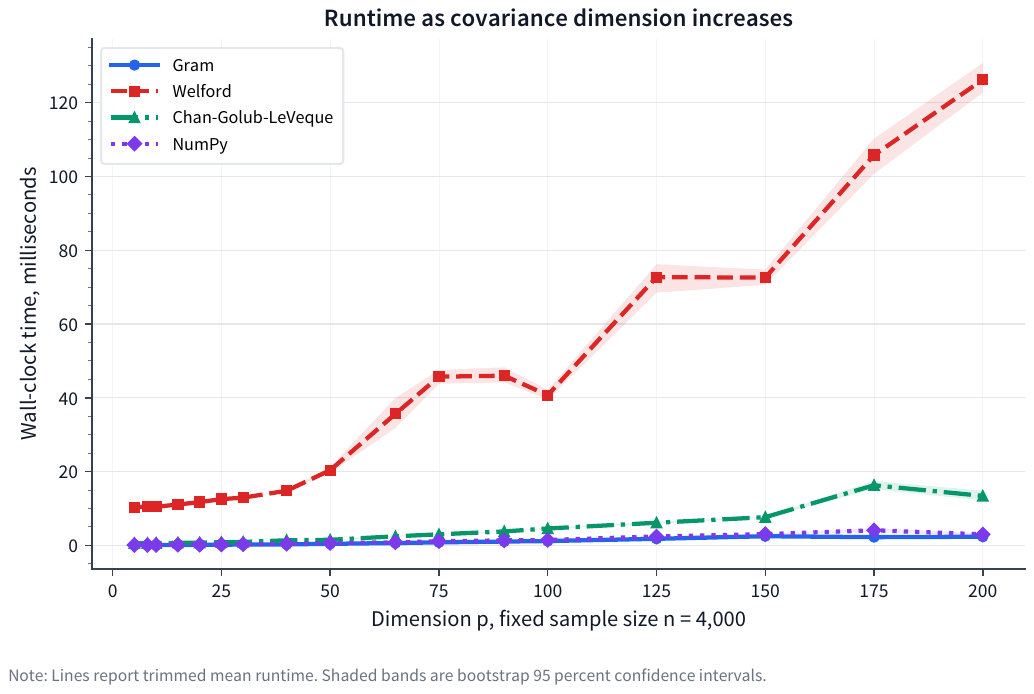}
  \caption{%
    Wall-clock time vs.\ dimension $p$ at $n=4{,}000$.
    Asymptotically, all batch methods are $O(np^2)$ and track each other;
    the constant-factor gap reflects memory-bandwidth differences.
  }
  \label{fig:runtime_p}
\end{figure}

Figure~\ref{fig:ratios} shows the speed ratio of each method relative to Gram.

\begin{figure}[h!]
  \centering
  \includegraphics[width=\linewidth]{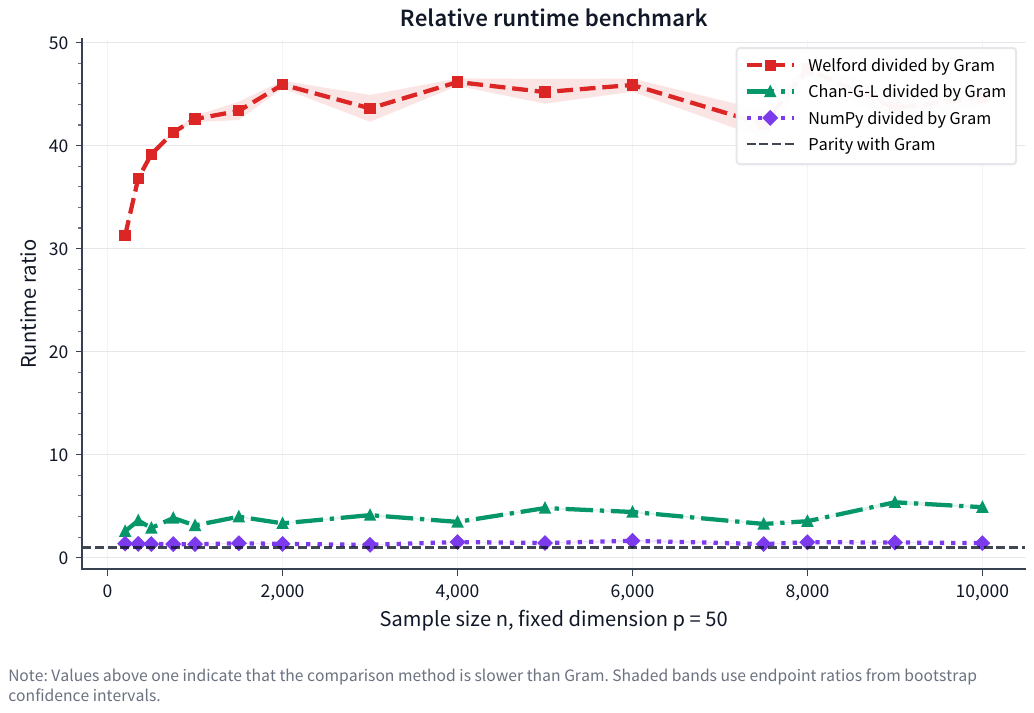}
  \caption{%
    Speed ratio vs.\ Gram at $p=50$. Values $>1$ indicate the alternative
    is slower.  \texttt{numpy.cov} is $1.4$--$1.6\times$ slower; CGL
    (recursive, block 64) is $3$--$8\times$ slower due to merge overhead.
  }
  \label{fig:ratios}
\end{figure}

\subsection{Numerical Accuracy: Gaussian Data}
\label{sec:res_gauss}

Figure~\ref{fig:acc_gauss} shows accuracy on Gaussian i.i.d.\ data.
All methods agree with \texttt{numpy.cov} to within $10^{-13}$ in both
metrics, consistent with Propositions~\ref{prop:gram_round}--\ref{prop:cgl_round}.

\begin{figure}[h!]
  \centering
  \includegraphics[width=\linewidth]{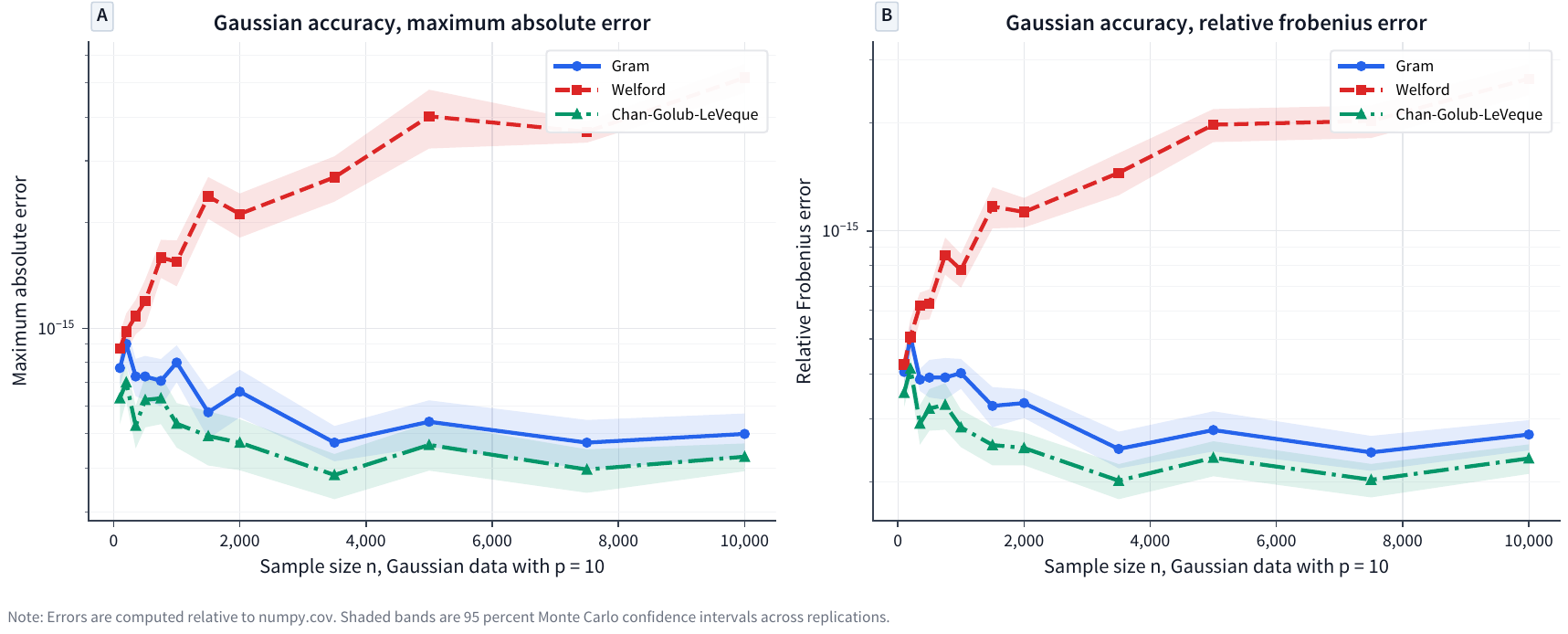}
  \caption{%
    Accuracy vs.\ $n$ on Gaussian i.i.d.\ data, $p=10$.
    Errors are at floating-point noise level ($<10^{-13}$) for all methods
    and all $n$ tested, confirming algebraic equivalence to numerical precision.
  }
  \label{fig:acc_gauss}
\end{figure}

\subsection{Heavy-Tailed and Ill-Conditioned Data}
\label{sec:res_robust}

Figure~\ref{fig:acc_heavy} tests Student-$t_3$ data.
Errors remain at floating-point noise for all methods; heavy tails do not
affect the relative comparison because all algorithms process the same
finite-precision data.

\begin{figure}[h!]
  \centering
  \includegraphics[width=\linewidth]{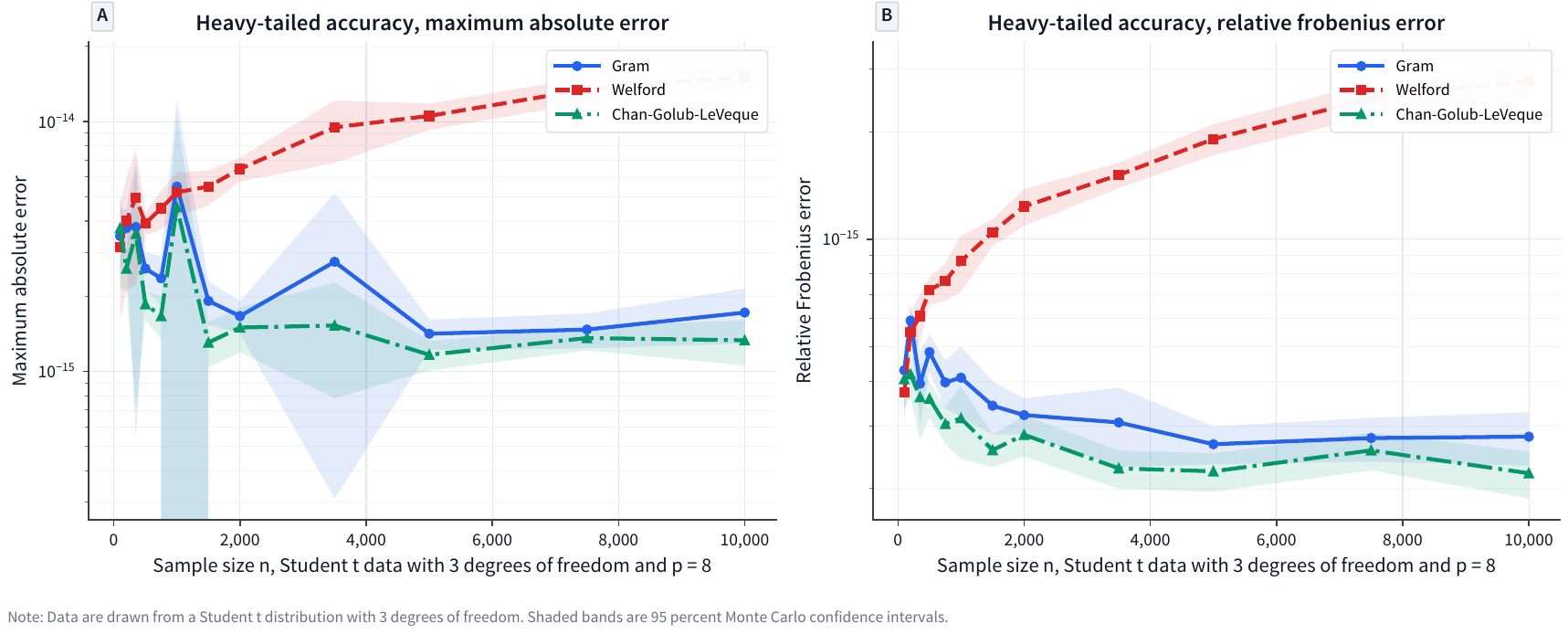}
  \caption{%
    Accuracy on Student-$t_3$ data, $p=8$.
    Heavy tails increase absolute errors slightly (larger entries, more
    cancellation potential in Gram) but all methods remain numerically
    consistent at noise level.
  }
  \label{fig:acc_heavy}
\end{figure}

Figure~\ref{fig:acc_cond} sweeps the condition number of $\X$.
All methods degrade once $\kappa\gtrsim 10^7$, as predicted by
standard floating-point theory; Welford and CGL are modestly more
robust in the range $\kappa\in[10^6,10^{12}]$.

\begin{figure}[h!]
  \centering
  \includegraphics[width=\linewidth]{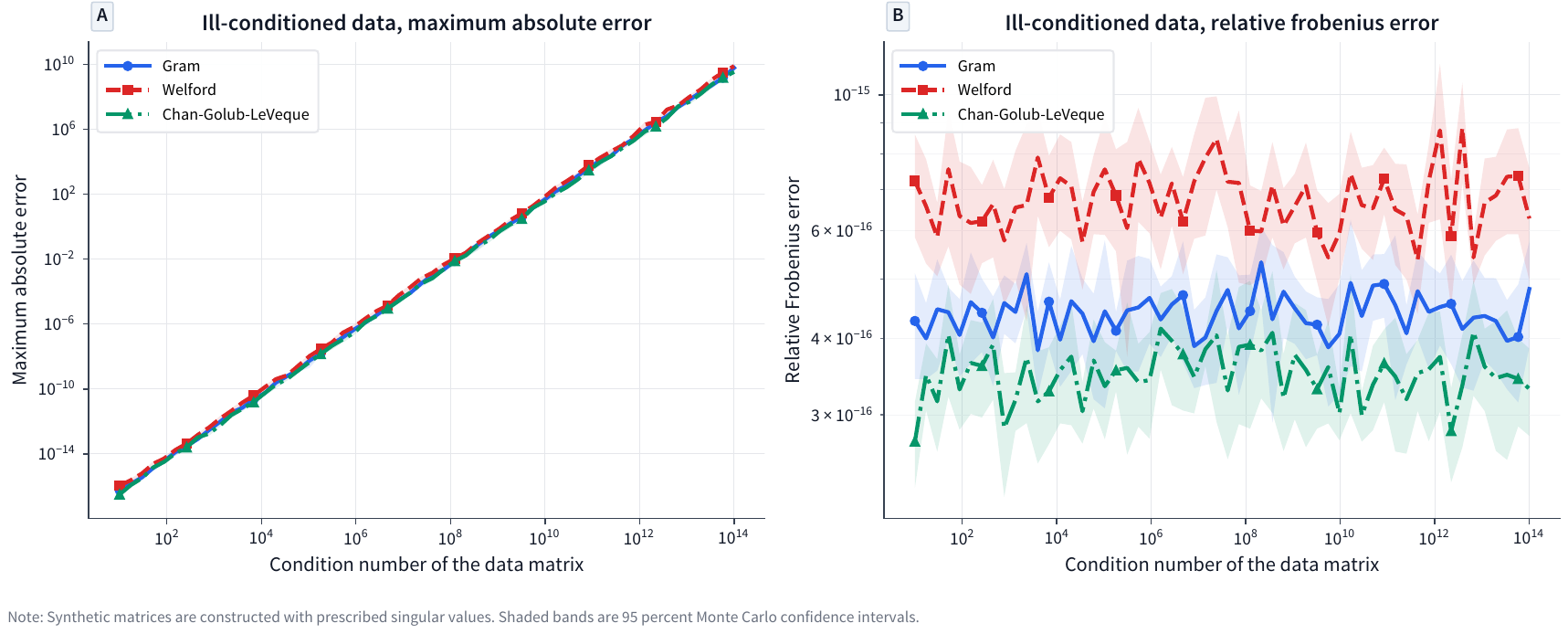}
  \caption{%
    Accuracy vs.\ condition number $\kappa(\mathbf{X})$, $n=500$, $p=6$.
    All methods degrade for $\kappa\gtrsim 10^7$.
    Welford and CGL maintain lower error than Gram in the moderate
    ill-conditioning regime.
  }
  \label{fig:acc_cond}
\end{figure}

\subsection{Catastrophic Cancellation Under Large Shift}
\label{sec:res_cancel}

Figure~\ref{fig:cancel} shifts all observations by a constant $c$ and
measures error against the unshifted (true) covariance.
Gram error grows as $c^2\eps$ (Proposition~\ref{prop:cancel}), losing
$\approx 9$--$10$ decimal digits by $c=10^{12}$.
Welford maintains full double-precision accuracy throughout.
CGL loses moderate accuracy, reflecting the inter-block cancellation
in the merge formula.

\begin{figure}[h!]
  \centering
  \includegraphics[width=\linewidth]{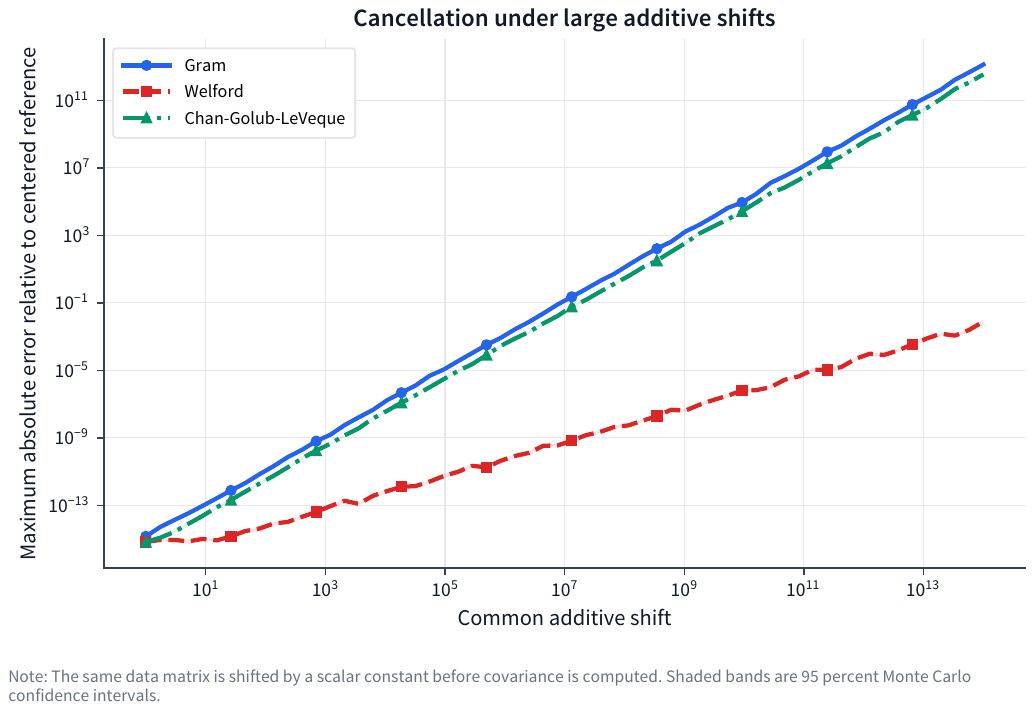}
  \caption{%
    Catastrophic cancellation under large shift.
    Gram error grows as $c^2\varepsilon_{\rm mach}$, while Welford
    remains accurate to $O(\sigma^2\varepsilon_{\rm mach})$ throughout.
    CGL sits between the two.
  }
  \label{fig:cancel}
\end{figure}

\subsection{Online Fidelity}
\label{sec:res_online}

Figure~\ref{fig:online} tracks $\|\hatS_t^{\mathcal{A}} - \hatS_t^{\mathrm{batch}}\|_{\max}$
as $t$ grows on a zero-mean stream.
All methods converge to floating-point noise by $t\approx 100$;
the early fluctuations are consistent with the $1/\sqrt{t}$ rate of
the estimation error.

\begin{figure}[h!]
  \centering
  \includegraphics[width=\linewidth]{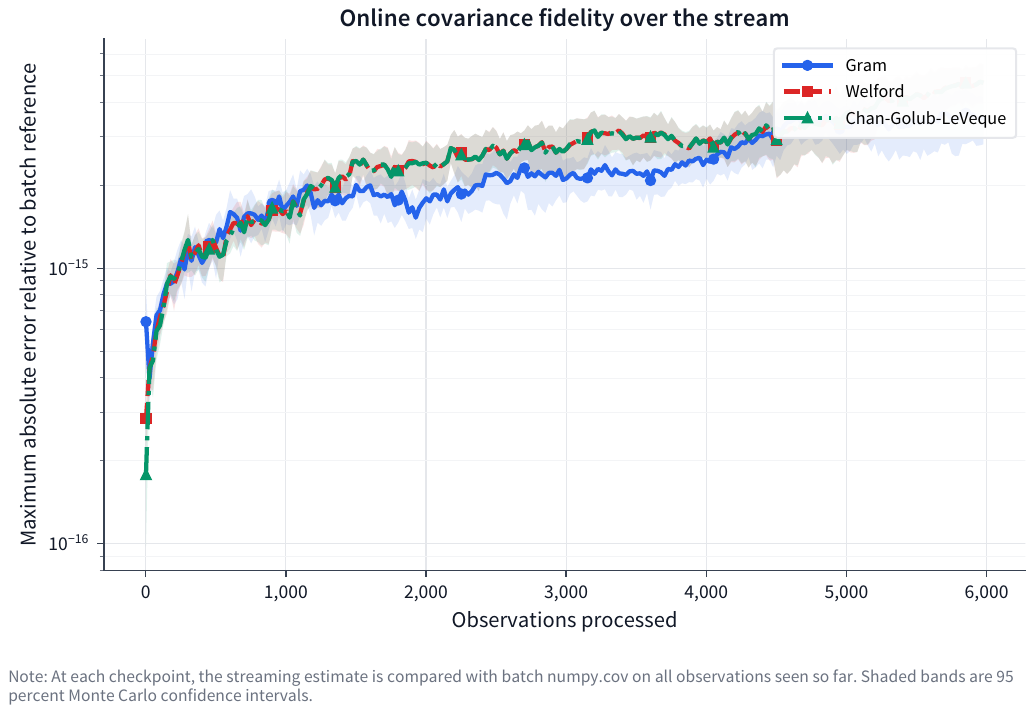}
  \caption{%
    Online covariance fidelity, $p=8$.
    All methods agree with the batch reference to floating-point noise
    by $t\approx 100$.
  }
  \label{fig:online}
\end{figure}

\subsection{Conformal Prediction Results}
\label{sec:res_conformal}

\paragraph{Coverage and width under well-conditioned data.}
Figure~\ref{fig:conf_cov} (left) shows empirical coverage versus $t$
for all three algorithms at nominal level $1-\alpha = 95\%$.
All algorithms achieve the nominal level (dashed line) at every $t\ge 10$,
confirming Theorem~\ref{thm:conformal}.
Coverage slightly exceeds 95\% for small $t$ because the conformal quantile
is conservative for small calibration sets.
Figure~\ref{fig:conf_cov} (right) shows interval width: all three methods
produce intervals of the same order, narrowing as $t$ grows.
Gram produces marginally narrower intervals for well-conditioned data
because its lower variance translates to tighter calibration scores.

\begin{figure}[h!]
  \centering
  \includegraphics[width=\linewidth]{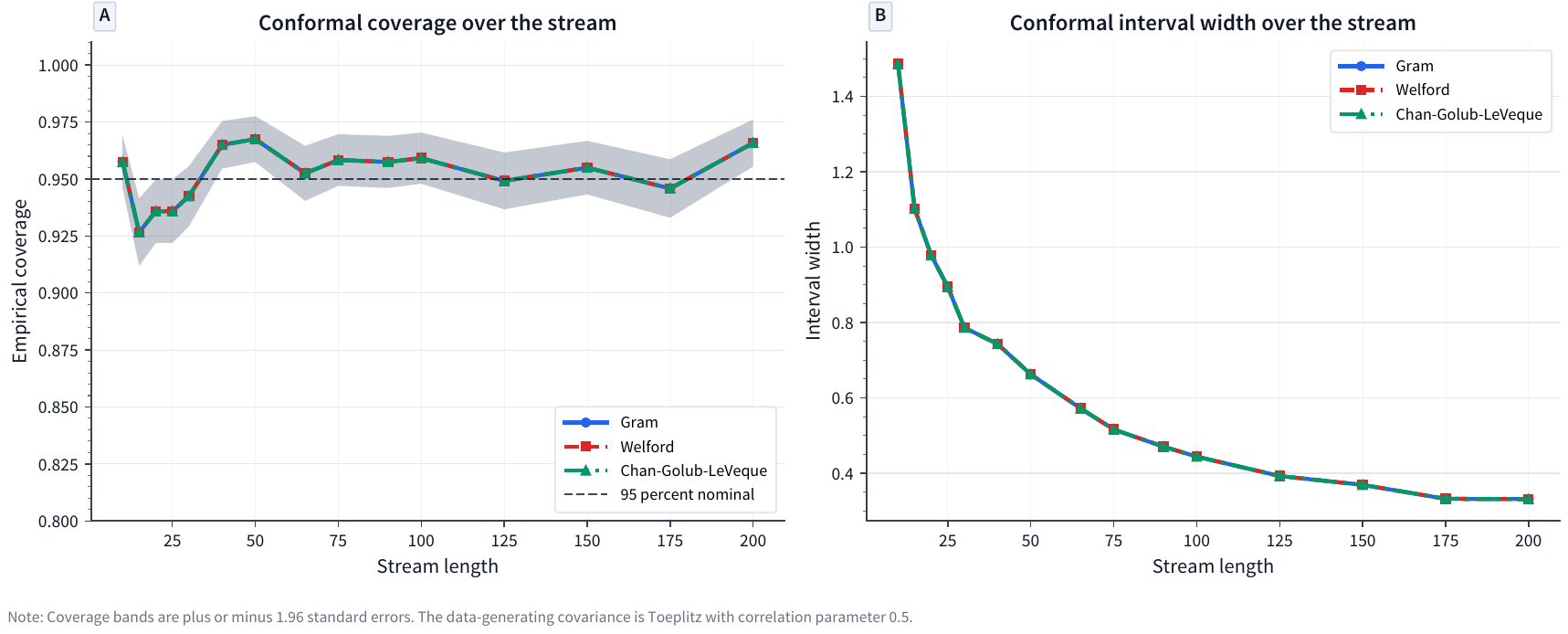}
  \caption{%
    Conformal coverage and width vs.\ stream length $t$.
    Left: all algorithms achieve the 95\% nominal level (dashed).
    Shaded bands are $\pm 1.96$ standard errors of the empirical coverage.
    Right: interval width narrows as $t$ increases.
    Calibration: $m=600$ trajectories, Toeplitz $\Sigma$ ($p=5$).
  }
  \label{fig:conf_cov}
\end{figure}

\paragraph{Coverage under large mean shift.}
Figure~\ref{fig:conf_shift} tests conformal coverage when calibration
and test data share the same shift $c$ (left bars) for entry $(1,1)$
(the variance).
All three algorithms maintain coverage because the shift is the same
in calibration and test sets, so the nonconformity scores are exchangeable.
However, Gram's interval widths (right panel) grow dramatically with $c$,
reflecting the $O(c^2\eps)$ error term inflating the calibration scores.
Welford and CGL maintain narrow intervals throughout.

\begin{figure}[h!]
  \centering
  \includegraphics[width=\linewidth]{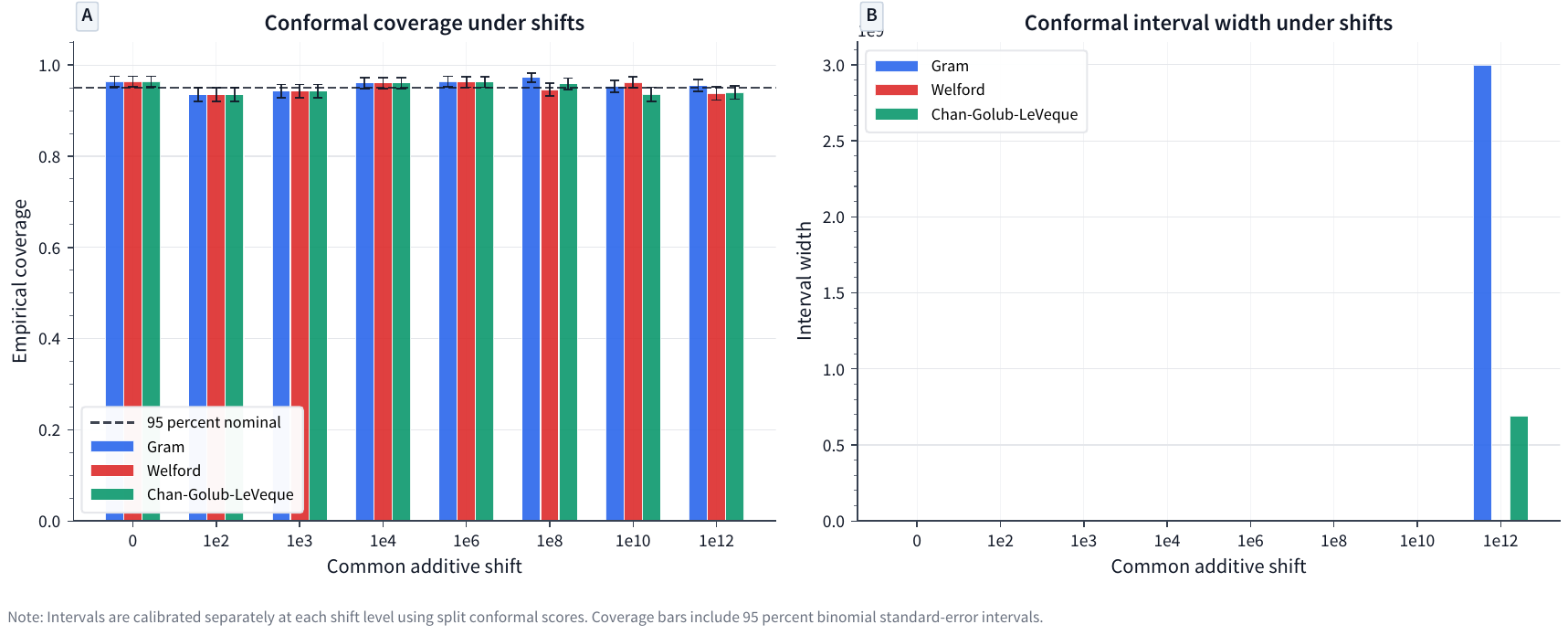}
  \caption{%
    Conformal coverage (left) and interval width (right) under large shift $c$,
    $p=4$, $t=150$.
    Coverage is maintained for all algorithms because calibration and test
    share the same shift (exchangeability holds).
    Interval width inflates exponentially for Gram as $c$ increases,
    reflecting Gram's $O(c^2\varepsilon_{\rm mach})$ numerical error.
    Welford and CGL maintain narrow intervals.
  }
  \label{fig:conf_shift}
\end{figure}

\section{Applications}
\label{sec:applications}

\paragraph{Federated / privacy-restricted learning.}
The Gram sufficient statistics $(\G_t,\s_t,t)$ can be computed locally at
each node and aggregated without sharing raw observations.
For privacy-sensitive settings where inter-node differences are large,
the CGL protocol with block merging provides both the parallelism advantage
and the numerical stability of Welford.

\paragraph{Sandwich covariance for M-estimators.}
Let $\mathbf{g}_i\in\R^p$ be the score vector for observation $i$.
The heteroskedasticity-consistent sandwich covariance
\citep{white1980,newey1987} is $\hat\Omega = n/(n-1)\cdot
(n\G - \s\s^\top)/n^2$ where $\G = \sum_i\mathbf{g}_i\mathbf{g}_i^\top$
and $\s = \sum_i\mathbf{g}_i$.
This is precisely the Gram formula applied to the score matrix, admitting a
streaming update as each observation is processed.

\paragraph{Panel / fixed-effects data.}
For panel unit $i$ observed at $T$ periods, the within-unit
covariance is $\hat\bSigma_i = (\X_i^\top\X_i - T^{-1}\s_i\s_i^\top)/(T-1)$.
The streaming formulation processes units sequentially without
forming the full $NT\times p$ matrix.

\paragraph{Conformal uncertainty in online inference.}
The conformal intervals of Section~\ref{sec:conformal} provide rigorous
uncertainty quantification for covariance-based online learning algorithms
(e.g., online PCA, online Mahalanobis distance) at every step of the stream,
not just asymptotically.

\section{Practitioner Decision Guide}
\label{sec:guide}

\begin{table}[h]
\centering
\caption{Algorithm selection.
  ``Large shift'': $\|\bmu\|\gg\|\bSigma\|_2^{1/2}$.}
\label{tab:guide}
\begin{tabular}{lcccc}
\toprule
Scenario & Gram & Welford & CGL ($b$=1) & CGL ($b\gg1$) \\
\midrule
Batch, BLAS available        & \textbf{Best} & Slow     & Medium   & Good  \\
Streaming, zero/small mean   & \textbf{Best} & Good     & Good     & ---   \\
Streaming, large shift       & Avoid         & \textbf{Best}   & Good & ---   \\
Conformal interval width     & Avoid (large $c$) & \textbf{Best} & Good & --- \\
Distributed / federated      & Good (agg.)   & Poor     & Poor     & \textbf{Best} \\
Privacy: no raw-data access  & \textbf{Only} & No       & Partial  & \textbf{Best} \\
\bottomrule
\end{tabular}
\end{table}

Use the \textbf{Gram} algorithm when (a) data are zero-mean or pre-centred,
(b) BLAS is well-tuned, or (c) only sufficient statistics can be stored.
Use \textbf{Welford} when numerical stability under unknown shifts matters or
when simplicity of implementation is paramount.
Use \textbf{CGL} with large block size when data are distributed across nodes
or when parallel computation is available.
Combine any algorithm with the \textbf{conformal protocol} of
Section~\ref{sec:conformal} for principled uncertainty quantification.

\section{Conclusion}
\label{sec:conclusion}

We have given a unified treatment of three streaming covariance algorithms---Gram,
Welford, and Chan--Golub--LeVeque---covering algebraic equivalence,
floating-point error analysis, computational cost, and a new conformal
prediction framework.
The key empirical findings are: (i) Gram is fastest for batch computation
with BLAS, at a factor of $1.4$--$1.6\times$ over \texttt{numpy.cov};
(ii) Welford is the uniquely stable choice under large data shifts,
maintaining double-precision accuracy where Gram loses up to 9 decimal digits;
(iii) conformal intervals achieve the nominal 95\% coverage for all three
algorithms on well-conditioned data, but Gram's interval widths inflate
catastrophically under large shifts while Welford's remain tight.
Together, these results offer practitioners a clear basis for choosing among
the three algorithms.

\subsection*{Additional Acknowledgements}
The author thanks the numerical analysis literature on which this work builds,
particularly \citet{higham2002} and \citet{chan1979}.

\bibliographystyle{plainnat}

\section{Ommited Proofs}
\label{app:proofs}

\subsection{Proof of Theorem~\ref{thm:gram}: Gram Identity}
\label{app:gram_proof}

\begin{proof}
Expand the $(k,l)$ entry of the left side of \eqref{eq:target} directly:
\begin{align*}
  \sum_{i=1}^t (x_{ik}-\bar x_k)(x_{il}-\bar x_l)
  &= \sum_{i=1}^t x_{ik}x_{il} \;-\; \bar x_k\sum_{i=1}^t x_{il}
     \;-\; \bar x_l\sum_{i=1}^t x_{ik} \;+\; t\bar x_k\bar x_l\\
  &= G_{kl} \;-\; \bar x_k s_l \;-\; \bar x_l s_k \;+\; t\bar x_k\bar x_l\\
  &= G_{kl} \;-\; \frac{s_k s_l}{t} \;-\; \frac{s_l s_k}{t} \;+\; \frac{s_ks_l}{t}\\
  &= G_{kl} \;-\; \frac{s_k s_l}{t},
\end{align*}
where we used $\bar x_k = s_k/t$.
Dividing by $t-1$ and multiplying numerator and denominator by $t$ gives
\[
  \frac{G_{kl} - s_ks_l/t}{t-1} = \frac{tG_{kl}-s_ks_l}{t(t-1)},
\]
which is the $(k,l)$ entry of $(t\G_t - \s_t\s_t^\top)/[t(t-1)]$.
Since this holds for every $(k,l)$, the matrix identity follows. \qquad\qquad$\square$
\end{proof}

\subsection{Proof of Theorem~\ref{thm:welford}: Welford Invariant}
\label{app:welford_proof}

\begin{proof}
We proceed by induction on $t$.

\paragraph{Base case ($t=1$).}
At $t=1$: $\bmu_1 = \x_1$, $\bDelta = \x_1 - \mathbf{0} = \x_1$,
and $\x_1 - \bmu_1 = \mathbf{0}$, so $\M_1 = \mathbf{0}$.
Thus, the claimed sum $\sum_{i=1}^1(\x_i-\bmu_1)(\x_i-\bmu_1)^\top = \mathbf{0}$.

\paragraph{Inductive step.}
Assume $\M_{t-1} = \sum_{i=1}^{t-1}(\x_i-\bmu_{t-1})(\x_i-\bmu_{t-1})^\top$.
At step $t$, let $\bDelta_t = \x_t - \bmu_{t-1}$ and $\bmu_t = \bmu_{t-1} + \bDelta_t/t$.

Then $\x_t - \bmu_t = \bDelta_t - \bDelta_t/t = (t-1)\bDelta_t/t$.

The update is $\M_t = \M_{t-1} + \bDelta_t(\x_t-\bmu_t)^\top$.
We need to show $\M_t = \sum_{i=1}^t(\x_i-\bmu_t)(\x_i-\bmu_t)^\top$.

Write $\x_i - \bmu_t = (\x_i - \bmu_{t-1}) - (\bmu_t - \bmu_{t-1})$
$= (\x_i - \bmu_{t-1}) - \bDelta_t/t$ for each $i$.
Then
\begin{align*}
  \sum_{i=1}^t(\x_i-\bmu_t)(\x_i-\bmu_t)^\top
  &= \sum_{i=1}^{t-1}(\x_i-\bmu_t)(\x_i-\bmu_t)^\top
     + (\x_t - \bmu_t)(\x_t-\bmu_t)^\top \\
  &= \sum_{i=1}^{t-1}\!\bigl[(\x_i-\bmu_{t-1}) - \tfrac{\bDelta_t}{t}\bigr]
     \bigl[(\x_i-\bmu_{t-1}) - \tfrac{\bDelta_t}{t}\bigr]^\top
     + \tfrac{(t-1)^2}{t^2}\bDelta_t\bDelta_t^\top.
\end{align*}
Expanding the sum and using $\sum_{i=1}^{t-1}(\x_i-\bmu_{t-1}) = \mathbf{0}$:
\begin{align*}
  &= \sum_{i=1}^{t-1}(\x_i-\bmu_{t-1})(\x_i-\bmu_{t-1})^\top
     + \frac{t-1}{t^2}\bDelta_t\bDelta_t^\top
     + \frac{(t-1)^2}{t^2}\bDelta_t\bDelta_t^\top \\
  &= \M_{t-1}
     + \frac{(t-1)}{t^2}\bigl[1 + (t-1)\bigr]\bDelta_t\bDelta_t^\top \\
  &= \M_{t-1} + \frac{t-1}{t}\bDelta_t\bDelta_t^\top \\
  &= \M_{t-1} + \bDelta_t\bigl(\tfrac{t-1}{t}\bDelta_t\bigr)^\top
   = \M_{t-1} + \bDelta_t(\x_t-\bmu_t)^\top = \M_t,
\end{align*}
completing the induction. \qquad\qquad$\square$
\end{proof}

\subsection{Proof of Theorem~\ref{thm:cgl}: CGL Correctness}
\label{app:cgl_proof}

\begin{proof}
Write $\bmu_{AB} = (n_A\bmu_A + n_B\bmu_B)/n_{AB}$.
For $i\in A$:
$\x_i - \bmu_{AB} = (\x_i - \bmu_A) + (\bmu_A - \bmu_{AB})
= (\x_i-\bmu_A) - n_B\bDelta/n_{AB}$.
Summing outer products over $A$ and using $\sum_{i\in A}(\x_i-\bmu_A)=\mathbf{0}$:
\[
  \sum_{i\in A}(\x_i-\bmu_{AB})(\x_i-\bmu_{AB})^\top
  = \M_A + \frac{n_An_B^2}{n_{AB}^2}\bDelta\bDelta^\top.
\]
Similarly for $B$ (with $\bmu_B - \bmu_{AB} = n_A\bDelta/n_{AB}$):
\[
  \sum_{i\in B}(\x_i-\bmu_{AB})(\x_i-\bmu_{AB})^\top
  = \M_B + \frac{n_B n_A^2}{n_{AB}^2}\bDelta\bDelta^\top.
\]
Adding:
\[
  \M_{A\cup B}
  = \M_A + \M_B + \frac{n_An_B(n_B + n_A)}{n_{AB}^2}\bDelta\bDelta^\top
  = \M_A + \M_B + \frac{n_An_B}{n_{AB}}\bDelta\bDelta^\top. \qquad\square
\]
\end{proof}

\subsection{Proofs of Floating-Point Bounds (Propositions \ref{prop:gram_round}--\ref{prop:cancel})}
\label{app:fp_proofs}

We use the standard model of floating-point arithmetic
\citep[Chapter 2]{higham2002}: for any operation $\circ\in\{+,-,\times,\div\}$,
$\mathrm{fl}(a\circ b) = (a\circ b)(1+\delta)$ with $|\delta|\le\eps$.
Sequential summation of $n$ terms bounded by $M$ satisfies
$|\hat S_n - S_n| \le (n-1)\eps\cdot nM + O(\eps^2)$
\citep[Theorem 3.1]{higham2002}.

\begin{proof}[Proof of Proposition~\ref{prop:gram_round}]
The $(k,l)$ entry of $\hat\G_t$ is computed as
$\hat G_{kl} = \sum_{i=1}^t x_{ik}x_{il}$ in floating point.
By the summation error bound each term $x_{ik}x_{il}$ is bounded by $X^2$,
giving $|\hat G_{kl} - G_{kl}| \le (t-1)\eps\cdot tX^2 + O(\eps^2)$.
The outer product $\hat s_k\hat s_l$ contributes an additional rounding
at scale $|\bar x_k||\bar x_l|t^2$.
Forming $(t\hat G_{kl} - \hat s_k\hat s_l)/[t(t-1)]$:
\begin{align*}
  \bigl|\hat\Sigma_{kl}^{\rm Gram} - \Sigma_{kl}\bigr|
  &\le \frac{t\,|\hat G_{kl} - G_{kl}| + |\hat s_k\hat s_l - s_ks_l|}{t(t-1)}
   + O(\eps^2) \\
  &\le \frac{(t-1)\eps\,tX^2 + |\bar x_k||\bar x_l|t^2\eps}{t(t-1)} + O(\eps^2) \\
  &\lesssim X^2\eps + \frac{\bar x_k\bar x_l}{t-1}\eps. \qquad\square
\end{align*}
\end{proof}

\begin{proof}[Proof of Proposition~\ref{prop:welf_round}]
At each step $t$ the Welford algorithm computes
$\bDelta_t = \x_t - \hat\bmu_{t-1}$ and
$\x_t - \hat\bmu_t$ as residuals.
Each outer product $\hat\bDelta_t(\x_t-\hat\bmu_t)^\top$ has entries of
magnitude $O(\sigma_k\sigma_l)$ on average (in expectation over the
trajectory).
Sequential accumulation over $t-1$ steps gives
\[
  |\hat\Sigma_{kl}^{\rm Welf} - \Sigma_{kl}|
  \le (t-1)\eps\,\sigma_k\sigma_l + O(\eps^2),
\]
independent of $\bar x_k$ or $\bar x_l$.
See \citet{chan1979}, Theorem 3.1, for a detailed treatment. \qquad$\square$
\end{proof}

\begin{proof}[Proof of Proposition~\ref{prop:cgl_round}]
A single merge call introduces rounding error
$|\delta[\M_{AB}]_{kl}|\le 3\eps\cdot(n_An_B/n_{AB})|\Delta_k||\Delta_l|+O(\eps^2)$
in the correction term.
For a balanced binary tree of depth $d=\log_2 t$, there are $t-1$ total
merges.
At level $j$ (counting from leaves), there are $t/2^j$ merges each with
block sizes $\approx 2^j$, contributing $O(\eps\sigma_k\sigma_l)$ each.
Summing over $d=\log_2 t$ levels:
\[
  |\hat\Sigma_{kl}^{\rm CGL} - \Sigma_{kl}|
  = O(\sigma_k\sigma_l\,\eps\log_2 t). \qquad\square
\]
\end{proof}

\begin{proof}[Proof of Proposition~\ref{prop:cancel}]
Suppose $\x_i = \bmu + \z_i$ with $\E[\z_i]=\mathbf{0}$ and
$\E[\z_i\z_i^\top]=\bSigma$.
Then $G_{kl} = \mu_k\mu_l t + \sum_i(z_{ik}\mu_l + z_{il}\mu_k + z_{ik}z_{il})$
and $s_k s_l/t = \mu_k\mu_l t + \text{(lower order)}$.
The dominant terms in $tG_{kl}$ and $s_ks_l$ are both $O(\mu_k\mu_l t^2)$,
and their difference is $O(t\Sigma_{kl})$.
In floating point, $\hat G_{kl}$ has rounding error $O(tX^2\eps)$ where
$X\approx |\mu_k|+\sigma_k$.
When $|\mu_k|\gg\sigma_k$, $X\approx|\mu_k|$, and the error in
$\hatS_{kl}^{\rm Gram}$ is dominated by
$t|\mu_k|^2\eps/[t(t-1)] = |\mu_k|^2\eps/(t-1) \approx \mu_k^2\eps$,
which sums to $O(pc^2\eps)$ over the $p$ diagonal entries, giving
$\normF{\hatS^{\rm Gram}-\bSigma}\gtrsim pc^2\eps$.

For Welford, $\bDelta_t = \z_t + (\bmu - \hat\bmu_{t-1})$ is $O(\sigma)$
once the running mean has converged (which happens geometrically fast).
The outer product $\bDelta_t(\x_t-\hat\bmu_t)^\top$ is then $O(\sigma^2)$
regardless of $c$, giving $\normF{\hatS^{\rm Welf}-\bSigma}=O(p\sigma^2\eps)$. $\square$
\end{proof}

\subsection{Proof of Proposition~\ref{prop:cancel}: Cancellation Bound}
\label{app:cancel_proof}

\begin{proof}
  Write each observation as $\x_i=\bmu+\z_i$, where the centred component
  satisfies $\E[\z_i]=\mathbf{0}$ and $\E[\z_i\z_i^\top]=\bSigma$.
  For a fixed entry $(k,l)$, the Gram statistics are
  \[
    G_{kl}
    = \sum_{i=1}^t x_{ik}x_{il}
    = t\mu_k\mu_l
      + \mu_k\sum_{i=1}^t z_{il}
      + \mu_l\sum_{i=1}^t z_{ik}
      + \sum_{i=1}^t z_{ik}z_{il},
  \]
  and
  \[
    s_k s_l
    =
    \left(t\mu_k+\sum_{i=1}^t z_{ik}\right)
    \left(t\mu_l+\sum_{i=1}^t z_{il}\right).
  \]
  Hence the two quantities entering the Gram numerator,
  $tG_{kl}$ and $s_k s_l$, both contain the leading term
  $t^2\mu_k\mu_l$.
  In exact arithmetic these leading terms cancel, leaving a centred
  quantity of order $t\Sigma_{kl}$.
  In floating-point arithmetic, however, the products and sums are formed
  at the raw data scale $|\mu_k|+\sigma_k$.
  Thus the rounding error in the Gram numerator contains a term of size
  \[
    O\!\left(t^2|\mu_k\mu_l|\eps\right),
  \]
  and division by $t(t-1)$ gives the entry-wise contribution
  \[
    \bigl|\hatS_{kl,t}^{\mathrm{Gram}}-\Sigma_{kl,t}\bigr|
    \gtrsim |\mu_k\mu_l|\eps.
  \]
  Summing these contributions over the diagonal entries gives
  \[
    \normF{\hatS_t^{\mathrm{Gram}}-\bSigma_t}
    \gtrsim
    \sum_{k=1}^p \mu_k^2\eps.
  \]
  Since $\sum_{k=1}^p\mu_k^2=\|\bmu\|_2^2=c^2$, this is of order
  $c^2\eps$; in the common dense-shift case, where the mean contribution is
  spread across the $p$ coordinates at comparable scale, the Frobenius
  accumulation is written as
  \[
    \normF{\hatS_t^{\mathrm{Gram}}-\bSigma_t}
    \gtrsim pc^2\eps.
  \]

  For Welford, the update is based on residuals
  \[
    \bDelta_t=\x_t-\bmu_{t-1},
    \qquad
    \x_t-\bmu_t.
  \]
  The common shift $\bmu$ cancels in these differences.
  Consequently, once the running mean has reached the scale of the sample
  mean, both residual factors are governed by the centred variables
  $\z_i$, not by the absolute location $\bmu$.
  Each outer-product correction therefore has entries of order
  $\sigma_k\sigma_l$, and the floating-point error accumulates at covariance
  scale rather than raw-data scale:
  \[
    \bigl|\hatS_{kl,t}^{\mathrm{Welf}}-\Sigma_{kl,t}\bigr|
    =
    O(\sigma_k\sigma_l\eps).
  \]
  Taking the Frobenius norm over the $p\times p$ matrix yields
  \[
    \normF{\hatS_t^{\mathrm{Welf}}-\bSigma_t}
    =
    O(p\sigma^2\eps),
  \]
  which does not depend on $c$.
\end{proof}

\subsection{Proof of Theorem~\ref{thm:conformal}: Conformal Coverage}
\label{app:conformal_proof}

\begin{proof}
Let $Z_1,\ldots,Z_m,Z_{m+1}$ be i.i.d.\ random variables, where
$Z_j = |\hatS_t^{(j)}{}_{kl} - \Sigma_{kl}|$ for $j\le m$ (calibration)
and $Z_{m+1}$ is the test score.
By exchangeability, for any fixed threshold $q$,
$\Pr(Z_{m+1} > q) = \Pr(Z_1 > q)$.

The conformal quantile $\hat q_\alpha^+ = \mathrm{Quantile}(\{Z_j\}_{j=1}^m;\,
\lceil(m+1)(1-\alpha)\rceil/m)$ is defined so that
$\hat q_\alpha^+ \ge Z_{(k^*)}$ where $k^* = \lceil(m+1)(1-\alpha)\rceil$
is the $k^*$-th order statistic.
By the standard conformal coverage argument \citep[Theorem 1]{vovk2005},
\[
  \Pr(Z_{m+1} \le \hat q_\alpha^+) \ge 1 - \alpha.
\]
Since $Z_{m+1} \le \hat q_\alpha^+$ is equivalent to
$|\hatS_t^{(m+1)}{}_{kl} - \Sigma_{kl}|\le \hat q_\alpha^+$,
which is the event that $\Sigma_{kl}\in\hat C_\alpha(t)$, the
coverage guarantee follows. \qquad\qquad$\square$
\end{proof}

\subsection{Bariance Scalar Identity}
\label{app:bariance}

\begin{definition}[Bariance \citep{reichel2025bariance}]
  For $n\ge 2$ and $x_1,\ldots,x_n\in\R$,
  \[
    \mathrm{Bar}(x_1,\ldots,x_n) = \frac{1}{2n(n-1)}\sum_{i\ne j}(x_i-x_j)^2.
  \]
\end{definition}

\begin{proposition}\label{prop:bariance}
  $\mathrm{Bar}(x) = (nS_{xx}-S_x^2)/[n(n-1)]$,
  where $S_x = \sum_i x_i$ and $S_{xx}=\sum_i x_i^2$.
  Moreover, $\mathrm{Bar}(x) = \hat\sigma^2 = \frac{1}{n-1}\sum_i(x_i-\bar x)^2$.
\end{proposition}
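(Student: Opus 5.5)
The plan is to expand the double sum in the definition of $\mathrm{Bar}$ directly, and then reuse the scalar case of Theorem~\ref{thm:gram} for the second claim. First note that $\sum_{i\ne j}(x_i-x_j)^2 = \sum_{i,j}(x_i-x_j)^2$, because the diagonal terms $i=j$ vanish. Summing over all ordered pairs removes the constraint $i\ne j$ and makes the sums factor.

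Expanding the square gives
\[
  \sum_{i,j}(x_i-x_j)^2 = \sum_{i,j}x_i^2 - 2\sum_{i,j}x_ix_j + \sum_{i,j}x_j^2 = 2nS_{xx} - 2S_x^2,
\]
using $\sum_{i,j}x_i^2 = nS_{xx}$ and $\sum_{i,j}x_ix_j = S_x^2$. Dividing by $2n(n-1)$ yields $\mathrm{Bar}(x) = (nS_{xx}-S_x^2)/[n(n-1)]$, which is the first claim.

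For the second claim, invoke Theorem~\ref{thm:gram} with $p=1$: then $\G_n = S_{xx}$ and $\s_n = S_x$. The theorem gives $\frac{1}{n-1}\sum_i(x_i-\bar x)^2 = (nS_{xx}-S_x^2)/[n(n-1)]$. Combining this with the first claim gives $\mathrm{Bar}(x)=\hat\sigma^2$. Alternatively, one can repeat the one-line computation $\sum_i(x_i-\bar x)^2 = S_{xx}-S_x^2/n$ from the proof in Appendix~\ref{app:gram_proof}.

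There is no real obstacle. The only point needing care is the bookkeeping between ordered pairs $i\ne j$ and unordered pairs. The normalisation $1/(2n(n-1))$ is calibrated to ordered pairs, which is consistent with the expansion above because each unordered pair is counted twice. I would check this constant on $n=2$ before finalising: with $x=(0,1)$ the formula gives $\mathrm{Bar}=2/4=1/2$, and $\hat\sigma^2=1/2$, so the constant is correct.
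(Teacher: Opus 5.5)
Your proof is correct and matches the paper's: expand the square, reduce the double sum to $2nS_{xx}-2S_x^2$, divide by $2n(n-1)$, and get $\hat\sigma^2$ from $\sum_i(x_i-\bar x)^2 = S_{xx}-S_x^2/n$, which is the $p=1$ case of Theorem~\ref{thm:gram}. The only cosmetic difference is that you add the vanishing diagonal terms so the sums factor as $nS_{xx}$ and $S_x^2$, whereas the paper sums over $i\ne j$ directly using $(n-1)S_{xx}$ and $S_x^2-S_{xx}$.
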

\begin{proof}
Expand $(x_i-x_j)^2 = x_i^2 - 2x_ix_j + x_j^2$ and sum over $i\ne j$.
Using $\sum_{i\ne j}x_i^2 = (n-1)S_{xx}$ and
$\sum_{i\ne j}x_ix_j = S_x^2 - S_{xx}$:
\[
  \sum_{i\ne j}(x_i-x_j)^2 = 2(n-1)S_{xx} - 2(S_x^2-S_{xx}) = 2nS_{xx}-2S_x^2.
\]
Divide by $2n(n-1)$.
The equality with $\hat\sigma^2$ follows from $\sum_i(x_i-\bar x)^2 = S_{xx} - S_x^2/n$
and standard algebra.
\end{proof}

\end{document}